%
%
\documentclass[aps,prl,reprint,groupedaddress]{revtex4-2}

\usepackage[utf8]{inputenc}
\usepackage{graphicx} 
\usepackage{amscd,amsmath,amssymb,amsfonts,xspace,mathrsfs,mathtools,amsthm,xcolor}
\usepackage{tabularx}
\usepackage{subcaption}
\usepackage{hyperref}
\usepackage{soul}
\usepackage{bbold}
\usepackage{bbm}
\hypersetup{
    colorlinks,
    linkcolor={red!50!black},
    citecolor={blue!50!black},
    urlcolor={blue!80!black}
}
\usepackage{parskip}
\usepackage{color-edits}
\usepackage{float}
\addauthor{ja}{orange}
\addauthor{aw}{blue}
\addauthor{rl}{red}

\bibliographystyle{plain}

\newcommand{\R}{\mathbb{R}}
\newcommand{\C}{\mathbb{C}}
\newcommand{\E}{\mathbb{E}}
\newcommand{\CP}{\mathbb{P}}

\newcommand{\id}{\mathbbm{1}}

\newcommand{\norm}[1]{\left\lVert#1\right\rVert}
\newcommand{\ess}{\mathrm{ess}}

\newtheorem{remark}{Remark}

\newtheorem{definition}{Definition}
\newtheorem{theorem}{Theorem}

\newtheorem{corollary}{Corollary}

\usepackage{thmtools}
\usepackage{thm-restate}

\usepackage{cleveref}


\begin{document}


\title{Geometric quantum control and the random Schrödinger equation}

\author{Rufus Lawrence}
\email{lawrejam@fel.cvut.cz}
\affiliation{%
Department of Computer Science,
 Czech Technical University in Prague, Czech Republic 
}%

\author{Ale\v{s} Wodecki}
\email{wodecki.ales@fel.cvut.cz}
\affiliation{%
Department of Computer Science,
 Czech Technical University in Prague, Czech Republic 
}%

\author{Johannes Aspman}
\email{johannes.aspman@cvut.cz}
\affiliation{%
Department of Computer Science,
 Czech Technical University in Prague, Czech Republic 
}%

\author{Lloren\c{c} Balada Gaggioli}%
 \email{llorenc.balada.gaggioli@fel.cvut.cz}
\affiliation{%
Department of Computer Science,
 Czech Technical University in Prague, Czech Republic 
}%
\affiliation{%
LAAS-CNRS, Université de Toulouse, Toulouse, France
}%

\author{Jakub Mare\v{c}ek}%
 \email{jakub.marecek@fel.cvut.cz}
\affiliation{%
Department of Computer Science,
 Czech Technical University in Prague, Czech Republic 
}%




\date{\today}

\begin{abstract}
Understanding and mitigating noise in quantum systems is a fundamental challenge in achieving scalable and fault-tolerant quantum computation. Error modelling for quantum systems can be formulated in many ways, some of which are very fundamental, but hard to analyse (evolution by general dynamical map) and others perhaps too simplistic to represent physical reality. In this paper, we present an intermediate approach, introducing the random Schrödinger equation, with a noise term given by a time-varying random Hermitian matrix as a means to model noisy quantum systems. We derive bounds on the error of the synthesised unitary in terms of bounds on the norm of the noise, and show that for certain noise processes these bounds are tight. We then show that in certain situations, minimising the error is equivalent to finding a geodesic on $SU(n)$ with respect to a Riemannian metric encoding the coupling between the control pulse and the noise process, thus connecting our work to the complexity geometry pioneered by Michael Nielsen.
\end{abstract}


\maketitle

\section{Introduction}

In order to achieve the low gate-error rates necessary for scalable fault-tolerant quantum computation, one needs to mitigate the effect of noise. In the context of quantum optimal control, mitigating the effect of noise on the system requires a model of the noise which faithfully represents the behaviour of the quantum system while remaining tractable mathematically and numerically. Some models of open (noisy) quantum systems are hard to analyse, while others are too simplistic to represent physical reality. 

The standard approach to modelling open quantum systems is to use the Gorini–Kossakowski–Sudarshan–Lindblad (GKSL) equation:

\begin{align}
    \frac{d\rho}{dt} = [H, \rho] + \sum_j\gamma_j \big( L_j \rho L_j^\dagger - \frac{1}{2}\{L_jL_j^\dagger, \rho\}\big).
\end{align}

Here, $H$ denotes the system Hamiltonian, the (non-Hermitian) operators $L_j$ are the \textit{jump operators} and the real numbers $\gamma_j$ denote the \textit{damping rates}. However, it is difficult to apply this approach to a specific quantum system, since in general, it is very difficult to write down the jump operators and damping rates for a given physical system from first principles. Although it is sometimes possible to  identify them experimentally, see for instance \cite{waqassurvey}, this approach also has its limitations. 

An alternative approach of representing 
the state of an open quantum system as a time-dependent random vector in a Hilbert space goes back at least to Belavkin \cite{Belavkin1990}. 
Subsequently, Gisin and Percival \cite{Gisin_1992} derive the \textit{stochastic Schrödinger equation} as an \textit{unraveling} of the GKSL equation. That is to say, given a GKSL equation for the density operator $\rho$, the authors construct an Itô-type stochastic differential equation (SDE) for the state $|\psi\rangle$ such that the expectation $\E[|\psi\rangle\langle\psi|]$ satisfies the original GKSL equation. A Schrödinger equation associated in such a way to the GKSL equation is called a stochastic Schrödinger equation. These types of equations have also been studied in the case of non-Markovian open quantum systems  \cite[e.g.]{STRUNZ2001237,RevModPhys.89.015001, chenudelcampo}. It should be noted that while the dynamics of the stochastic Schrödinger equation give rise to mixed states after taking an ensemble average over the probability space, the trajectories $|\psi(t, \omega)\rangle$ evolve unitarily in this approach. 

The purpose of this paper is twofold. We begin by showing that a large class of open quantum systems, sometimes called \textit{noisy quantum systems} \cite{xue2025traversingquantumcontrolrobustness, kosut2022robust, aspman2024robustquantumgatecomplexity}, can be modelled by a random ordinary differential equation (RODE) which we refer to as the random Schrödinger equation. Just like the stochastic Schrödinger equation, the dynamics of random Schrödinger equation also give rise to mixed states upon taking ensemble averages. We then use this formalism to study robust optimal quantum control. 

The random Schrödinger for the time-evolution operator $U_S(t, \cdot)$ is the following RODE:

\begin{align}\label{eq:full_evolution_equation}
\begin{split}
    \frac{d U_S(t, \cdot)}{dt} &= -i (H_{0,S}(t) + H_{1,S}(t, \cdot)) U_S(t, \cdot), \\  U_S(0, \cdot) &= \id. 
\end{split}
\end{align}

Here, $H_{0,S}$ is the deterministic (control) Hamiltonian, $H_{1,S}$ is an essentially bounded, Hermitian matrix-valued random process representing the noise, and $S$ denotes that the operators are considered in the Schrödinger picture. Just as for the stochastic Schrödinger equation, the trajectories of solutions evolve unitarily, and mixed states are only obtained after taking ensemble averages. The RODE approach to modelling noisy quantum systems is complementary to the Itô-type SDE modelling introduced in \cite{Gisin_1992}. In particular, while Itô calculus and the associated SDE yield the appropriate framework for modelling quantum systems subject to ``white" Hermitian Gaussian noise, the RODE approach lends itself more readily to modelling systems subject to time-correlated, bounded Hermitian noise processes. 

The control theory of open quantum systems presents many challenges, and is not well studied. While Dirr et al. \cite{dirr2009lie} have studied the controllability/reachability of density matrices subject to GKSL dynamics in abstract terms using techniques from differential geometry, not much is known about the precision- or fidelity-optimal control problem. The control theory of the stochastic Schrödinger equation has been studied in \cite{Kallush_2014} and \cite{muller_et_al}, where the authors derive bounds that are similar to the bounds derived in this paper. Recently, some progress has been made on the robust control problem in \cite{kosut2022robust, kosut2023robust, xue2025traversingquantumcontrolrobustness, altafini2004coherent, controlling_the_uncontrollable, Yang_2024, Nam_Nguyen_2024}, as well as in \cite{funckeberberich} in the context of robust quantum annealing. In these works, the noise model is somewhat more ad-hoc, consisting simply of a ``generic" error term in the Hamiltonian, and probabilistic questions are not fully addressed. The RODE approach introduced in this paper should be considered as a probabilistic improvement on these models, rather than as a substitute for the GKSL/stochastic Schrödinger approach. 

By introducing a coupling between the control $H_{0,S}$ and the noise $H_{1,S}$, we are able to use the random Schrödinger model to place the ideas of \cite{kosut2022robust, xue2025traversingquantumcontrolrobustness, xue2025traversingquantumcontrolrobustness, funckeberberich} on a more rigorous mathematical and probabilistic footing. In this paper, we provide a first step in this direction by showing that for certain classes of Hermitian noise processes, it makes sense to model a quantum system as an RODE. Following this, we provide ``worst case" bounds that apply to any such noise. Subsequently, we formulate a condition on the noise that makes this bound tight. These bounds allow us to discuss a natural geometric interaction between the control field and the noise, which  ultimately allows us to link our results to the seminal work of Nielsen et al. \cite{nielsen2006geometric, nielsen2006optimal, nielsen2006quantum, dowling2008geometry} on the geometry of quantum computation. In particular, we show that for a natural class of couplings between control and noise,  the fidelity-optimal control problem reduces to the problem of finding geodesics with respect to an appropriately defined \textit{noise metric}. These noise metrics are analogous to the \textit{complexity metrics} introduced by Nielsen and his coauthors.

\subsection{Summary of results}

In this section, we summarize the main results of the paper, giving physical intuition and sketching proofs, where possible. The full proofs are given in the following sections of the paper. 

Our first contribution is a bound on the error induced by the noise term $H_{1,S}$. Let $U_{0,S}$ be the operator that satisfies the noiseless equation
\begin{align}
\begin{split}
\frac{dU_{0,S}(t)}{dt} &=-iH_{0,S}(t)U_{0,S}(t) \\U_{0,S}(0) &=\id.    
\end{split}
\end{align}
The following theorem quantifies the difference between the noiseless evolution given by $U_{0,S}$ and the noisy evolution $U_S$ (see \eqref{eq:full_evolution_equation}).

\vskip 6pt
\begin{restatable}[]{thm}{basicerror}
\label{thm:basicerror}
If $\ess\,\sup_{t, \omega} \norm{H_{1,S}(t,\omega)} \leq K$, the worst case error induced by the noise grows linearly in time. That is, $\ess\,\sup_\omega \norm{U_S(t, \omega) - U_{0,S}(t)} \leq Kt$ with probability $1$. 
\end{restatable}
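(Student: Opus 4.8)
The plan is to strip the control Hamiltonian out of the problem by a unitary change of frame, after which the discrepancy between the noisy and noiseless flows is driven purely by the noise and can be bounded termwise in the defining integral equation. First I would invoke pathwise well-posedness: for $\CP$-almost every $\omega$ the map $t\mapsto H_{1,S}(t,\omega)$ is essentially bounded and measurable, so \eqref{eq:full_evolution_equation} admits a unique absolutely continuous solution $U_S(\cdot,\omega)$, and since its generator is $-i$ times a Hermitian matrix, $U_S(t,\omega)$ --- like $U_{0,S}(t)$ --- is unitary for every $t$ (here and throughout $\norm{\cdot}$ denotes the operator norm).

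Next I would introduce the interaction-picture propagator $W(t,\omega) := U_{0,S}(t)^\dagger U_S(t,\omega)$, equivalently $U_S = U_{0,S}W$. Differentiating this product and substituting the two defining equations, the $-iH_{0,S}$ contributions cancel exactly, leaving
\begin{equation*}
\frac{\dd}{\dd t} W(t,\omega) = -i\,\widetilde H_1(t,\omega)\,W(t,\omega), \qquad W(0,\omega) = \id ,
\end{equation*}
with $\widetilde H_1(t,\omega) := U_{0,S}(t)^\dagger H_{1,S}(t,\omega) U_{0,S}(t)$ the noise in the interaction picture. Writing this in integral form and taking norms,
\begin{equation*}
\norm{W(t,\omega) - \id} \;\le\; \int_0^t \norm{\widetilde H_1(s,\omega)}\,\norm{W(s,\omega)}\,\dd s ,
\end{equation*}
I would then use that $W(s,\omega)$ is a product of unitaries, hence $\norm{W(s,\omega)} = 1$, and that the operator norm is unitarily invariant, so $\norm{\widetilde H_1(s,\omega)} = \norm{H_{1,S}(s,\omega)} \le K$ for almost every $s$. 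This gives $\norm{W(t,\omega) - \id} \le Kt$, and one more use of unitary invariance yields $\norm{U_S(t,\omega) - U_{0,S}(t)} = \norm{U_{0,S}(t)\bigl(W(t,\omega) - \id\bigr)} = \norm{W(t,\omega) - \id} \le Kt$ for all $\omega$ outside a $\CP$-null set.

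The argument is short, and the only point I expect to need care is bookkeeping with the essential suprema: the hypothesis controls $\norm{H_{1,S}(t,\omega)}$ only for $(\mathrm{Leb}\otimes\CP)$-almost every $(t,\omega)$, so I would invoke Tonelli on $[0,t]\times\Omega$ to conclude that for $\CP$-almost every $\omega$ the set of exceptional times is Lebesgue-null, which is exactly what justifies $\int_0^t \norm{H_{1,S}(s,\omega)}\,\dd s \le Kt$ and hence the ``with probability $1$'' statement. It is also worth stressing why the frame change is indispensable: estimating $\norm{U_S - U_{0,S}}$ directly and applying Grönwall's inequality leaves a spurious factor $\exp\!\bigl(\int_0^t \norm{H_{0,S}(s)}\,\dd s\bigr)$, so the control-independent, linear-in-$t$ bound emerges only after the $H_{0,S}$ dependence has been conjugated away.
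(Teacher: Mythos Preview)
Your proposal is correct and follows essentially the same route as the paper: transform to the interaction picture via $W=U_{0,S}^\dagger U_S$ (the paper's $U_I$), write the resulting evolution as an integral equation, and bound termwise using unitarity and unitary invariance of the norm. Your extra care with the Tonelli argument for the essential suprema and the remark on why Grönwall alone would be too crude are welcome additions, but the underlying argument is the paper's own.
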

\begin{proof}
(Sketch) Let $U_{0,S}(t)$ be the propagator associated to $H_{0,S}$, transform \eqref{eq:full_evolution_equation} into the interaction picture via $U_{0,S}$ and integrate. The result then follows by applying standard matrix norm inequalities.
\end{proof}

The subsequent theorem provides a geometric refinement of the previous theorem.

\vskip 6pt
\begin{restatable}[]{thm}{geometricerror}
\label{thm:geometricerror}
If $\Lambda(t) := \ess \, \sup_\omega \norm{H_{1,S}(t, \omega)}_F$, we have the following inequality:
\begin{align}
\begin{split}
    d_F(U_I(t, \omega), U_{0,S}(t)) &\leq \int_0^t \norm{H_{1,S}(s, \omega)}_Fds \\ &\leq \int_0^t \Lambda(s) ds.
\end{split}
\end{align}
\end{restatable}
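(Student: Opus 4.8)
The plan is to pass to the interaction picture, where the left‑hand side of the inequality becomes the $d_F$‑distance from $\id$ to the propagator driven by the rotated noise alone, and then to bound that distance from above by the Riemannian length of the obvious curve joining $\id$ to it.

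First I would set $U_I(t,\omega):=U_{0,S}(t)^\dagger U_S(t,\omega)$. Differentiating and substituting \eqref{eq:full_evolution_equation} together with the noiseless equation for $U_{0,S}$ shows that, for $\CP$-a.e.\ fixed $\omega$, the curve $s\mapsto U_I(s,\omega)$ is the absolutely continuous solution of
\begin{equation}
\frac{d U_I(s,\omega)}{ds}=-i\,\tilde H_{1}(s,\omega)\,U_I(s,\omega),\qquad U_I(0,\omega)=\id,
\end{equation}
where $\tilde H_{1}(s,\omega):=U_{0,S}(s)^\dagger H_{1,S}(s,\omega)U_{0,S}(s)$; absolute continuity (i.e.\ $U_I(\cdot,\omega)\in W^{1,\infty}$) is exactly what the RODE existence theory underlying \Cref{thm:basicerror} provides, since $H_{1,S}$ is only essentially bounded. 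By left-invariance of $d_F$ the left-hand side of the theorem equals $d_F(U_I(t,\omega),\id)$, so it suffices to bound that.

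Next I would use that $U(n)$ with the bi-invariant Frobenius metric (whose value at $\id$ on a skew-Hermitian $X$ is $\sqrt{-\Tr X^2}$) is a complete Riemannian manifold; by Hopf–Rinow, $d_F(U_I(t,\omega),\id)$ is the infimum of the Riemannian lengths of curves joining $\id$ to $U_I(t,\omega)$, hence is at most the length of the particular rectifiable curve $s\mapsto U_I(s,\omega)$, $s\in[0,t]$. That length is $\int_0^t \lVert \dot U_I(s,\omega)\rVert_{F,U_I(s,\omega)}\,ds$, and by right-invariance of the metric the integrand equals $\lVert \dot U_I(s,\omega)U_I(s,\omega)^{-1}\rVert_{F,\id}=\lVert -i\tilde H_1(s,\omega)\rVert_F=\lVert \tilde H_1(s,\omega)\rVert_F$, which by invariance of the Frobenius norm under unitary conjugation equals $\norm{H_{1,S}(s,\omega)}_F$. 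This yields the first inequality. For the second, the set $\{(s,\omega):\norm{H_{1,S}(s,\omega)}_F>\Lambda(s)\}$ has $\CP$-null $\omega$-sections for every $s$ by the definition of $\Lambda$ as an essential supremum; joint measurability of the noise process (a standing assumption) together with Tonelli then forces this set to be Lebesgue $\times\,\CP$-null, so for $\CP$-a.e.\ $\omega$ one has $\norm{H_{1,S}(s,\omega)}_F\le\Lambda(s)$ for a.e.\ $s$, and integration over $[0,t]$ gives the bound with probability one.

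The routine parts are the interaction-picture change of variables and the computation of the curve's speed. I expect the main obstacle to be the measure-theoretic bookkeeping around the essential suprema in both inequalities: making precise, via joint measurability and Tonelli, the passage from ``for each $s$, a.e.\ $\omega$'' to ``for a.e.\ $\omega$, a.e.\ $s$'', and checking that the single $\CP$-null set on which the RODE trajectory fails to be an absolutely continuous $U(n)$-valued curve can be absorbed into this bookkeeping. Once that is in place, the geometric estimate (distance $\le$ length of a connecting curve) is entirely standard.
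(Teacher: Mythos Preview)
Your proposal is correct and follows essentially the same route as the paper: pass to the interaction picture so that the relevant quantity becomes $d_F(U_I(t,\omega),\id)$, then bound this Riemannian distance by the length of the curve $s\mapsto U_I(s,\omega)$, whose speed is $\norm{H_{1,I}(s,\omega)}_F=\norm{H_{1,S}(s,\omega)}_F$ by unitary invariance of the Frobenius norm. The paper's argument is terser and omits the measure-theoretic bookkeeping you supply for the second inequality; your Tonelli argument is a welcome addition, and your invocation of Hopf--Rinow makes explicit what the paper leaves implicit. (Note that the statement as written has $d_F(U_I,U_{0,S})$, which appears to be a typo for $d_F(U_S,U_{0,S})=d_F(U_I,\id)$; the paper's own proof works with the latter, and so do you.)
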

\begin{proof}
(Sketch) Similar to the previous proof.    
\end{proof}

Here, $\norm{\cdot}_F$ denotes the Frobenius norm, and $d_F(\cdot, \cdot)$ the metric (distance function) on $SU(n)$ induced by the Frobenius norm. We introduce a condition on the noise that makes the bound  tight, which we call the \textit{worst-case noise condition} (WCNC), see \cref{def:WCNC1}. As the name suggests, the WCNC is a condition that guarantees that the noise is aligned in the worst possible direction with maximal intensity with positive probability. For noise processes satisfying the WCNC the bound in \cref{thm:geometricerror} is tight in a probabilistic sense:

\vskip 6pt
\begin{restatable}[]{thm}{tightness}
\label{thm:tightness}
For noise processes $H_{1,S}$ satisfying the \textit{worst-case noise condition} (see \cref{def:WCNC1}), we have that:
\begin{align}
    \CP \left[ \int_0^t \Lambda(s) ds - d_F(U_I(t, \omega), \id) < \epsilon \right] > 0.
\end{align}
\end{restatable}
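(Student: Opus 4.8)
The plan is to exhibit a positive-probability event on which the realised trajectory of the noisy propagator is, up to an error controlled by the parameter in \cref{def:WCNC1}, a \emph{length-minimising geodesic} of $SU(n)$ emanating from $\id$, of length exactly $\int_0^t\Lambda(s)\,ds$; on that event the bracketed quantity is below $\epsilon$. It is worth noting at the outset why this is the right target: the bound of \cref{thm:geometricerror} is a chain of two inequalities, $d_F(U_I(t,\omega),\id)\le\int_0^t\norm{H_{1,S}(s,\omega)}_F\,ds\le\int_0^t\Lambda(s)\,ds$, the first of which saturates precisely when the trajectory is a minimising geodesic and the second precisely when the noise runs at maximal intensity $\Lambda(s)$ for a.e.\ $s$; the worst-case noise condition is designed to make both occur simultaneously with positive probability.

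First I would pass to the interaction picture: $U_I(s,\omega)=U_{0,S}(s)^\dagger U_S(s,\omega)$ solves $\dot U_I=-i\,\widetilde H(s,\omega)\,U_I$ with $U_I(0)=\id$ and $\widetilde H(s,\omega):=U_{0,S}(s)^\dagger H_{1,S}(s,\omega)\,U_{0,S}(s)$, and since the Frobenius norm is unitarily invariant, $\norm{\widetilde H(s,\omega)}_F=\norm{H_{1,S}(s,\omega)}_F$, so $\Lambda$ is unchanged and the claim becomes a statement about the RODE generated by $\widetilde H$. The worst-case noise condition then yields a fixed traceless Hermitian $V$ with $\norm{V}_F=1$ --- the ``worst-case direction'', which in the Schrödinger picture corresponds to a noise co-rotating with the control, $H_{1,S}(s,\omega)\approx\Lambda(s)U_{0,S}(s)VU_{0,S}(s)^\dagger$ --- such that for each $\delta>0$ the event $A_\delta:=\{\omega:\ \norm{\widetilde H(s,\omega)-\Lambda(s)V}_F\le\delta\ \text{for a.e.}\ s\in[0,t]\}$ has $\CP[A_\delta]>0$ (taking $\delta=0$ if the condition is phrased with exact alignment).

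On $A_\delta$ I would compare $U_I(\cdot,\omega)$ with the worst-case propagator $W(s):=e^{-i\phi(s)V}$, where $\phi(s):=\int_0^s\Lambda(r)\,dr$, which solves $\dot W=-i\Lambda(s)VW$, $W(0)=\id$. Subtracting the two equations and applying Grönwall's inequality gives $\norm{U_I(t,\omega)-W(t)}\le\delta\,C(t)$ with $C(t)<\infty$ depending only on $t$ and $\ess\,\sup_{s\le t}\Lambda(s)$. Since $W(t)=e^{-i\phi(t)V}$ lies on the one-parameter subgroup generated by the unit vector $-iV\in\mathfrak{su}(n)$, the curve $u\mapsto e^{-iuV}$ is a unit-speed geodesic of the bi-invariant (Frobenius) metric, so $d_F(W(t),\id)=\phi(t)=\int_0^t\Lambda(s)\,ds$, \emph{provided} $\phi(t)$ lies below the injectivity radius of $SU(n)$. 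Using the reverse triangle inequality for $d_F$, left-invariance, and the comparability of the geodesic and the ambient Frobenius distances near $\id$, one obtains on $A_\delta$
\begin{align*}
\int_0^t\Lambda(s)\,ds-d_F(U_I(t,\omega),\id)
&=d_F(W(t),\id)-d_F(U_I(t,\omega),\id)\\
&\le d_F\big(W(t),U_I(t,\omega)\big)\\
&\le c\,\norm{U_I(t,\omega)-W(t)}\ \le\ c\,C(t)\,\delta .
\end{align*}
Choosing $\delta$ so small that $c\,C(t)\,\delta<\epsilon$ makes $A_\delta$ a subset of the event in the statement, which therefore has positive probability.

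The hard part is the geometric input above: tightness genuinely requires $\int_0^t\Lambda(s)\,ds$ to be below the injectivity radius of $SU(n)$, so that the worst-case one-parameter subgroup is still minimising. Beyond that radius a geodesic wraps around, $d_F(e^{-i\phi V},\id)<\phi$, and no realisation of the noise can saturate \cref{thm:geometricerror}; the theorem is thus really a short-time statement (or this restriction is absorbed into \cref{def:WCNC1}), and pinning it down requires the geodesic structure of the bi-invariant metric on $SU(n)$. By comparison, the Grönwall estimate of the third paragraph and the metric-comparability bound are routine.
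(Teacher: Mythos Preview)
Your argument is essentially the paper's own proof: pass to the interaction picture, identify the ``worst-case'' one-parameter geodesic $W(s)=\exp(-i\phi(s)V)$ generated by the frozen direction at time $0$, bound the distance from the realised $U_I$ to $W$ by the size of the remainder, and finish with the triangle inequality for $d_F$. The only cosmetic differences are that the paper takes $V=\hat H_{1,I}(0,\omega)$ (so the comparison direction is $\omega$-dependent rather than fixed, which is what \cref{def:WCNC1} actually gives you) and carries out your Gr\"onwall step as a \emph{second} interaction-picture transformation with respect to $\tilde U_I(s)=\exp\bigl(-i\int_0^s\Lambda(r)\hat H_{1,I}(0,\omega)\,dr\bigr)$, which reduces the comparison to $d_F(U_{I'},\id)=\mathcal O(\epsilon)$; the content is identical.

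Your final paragraph on the injectivity radius is a point the paper does not raise: it simply asserts $d_F(\id,\tilde U_I(t))=\int_0^t\Lambda(s)\,ds$ without the caveat. You are right that this equality only holds while the one-parameter subgroup remains minimising, so the tightness statement is implicitly a short-time one; this is a genuine observation rather than a gap in your argument.
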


\begin{proof}
(sketch) In general, we expect the noise to exhibit a degree of self-cancellation, see remark \ref{rk:uncorrelated}. However, this can fail: if, in the interaction picture the noise process is always oriented in the same axis and close to its bound we show that the solution approaches its bound. The proof is technical and relies on a secondary interaction picture transformation.
\end{proof}

We then provide a general result that gives conditions on the noise $H_{1,S}(t, \omega)$ under which the \textit{worst-case noise condition} of \cref{def:WCNC1} may be satisfied.

\vskip 6pt
\begin{restatable}[]{thm}{markov}
\label{thm:markov}
Let $\mathcal{H}_E$ denote the space of Hermitian matrices with Frobenius norm less than $E$, and let $H_{1,S}(t, \cdot)$ be a stochastic process with almost surely continuous sample paths and let $H_{0,S}(t) \in \mathcal{H}_E$ for some $E \in \R$. Suppose further that $H_{1,S}(t, \cdot)$ satisfies the strong Markov property, and that the associated measure on $C^0([0,T], \mathcal{H}_E)$ is positive on cylindrical sets. Then $H_{1,S}(t, \cdot)$ satisfies the worst-case noise condition with respect to $H_{0,S}$.
\end{restatable}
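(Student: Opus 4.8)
\emph{Proof idea.} Recall that the worst-case noise condition (\cref{def:WCNC1}) requires, for every $\epsilon>0$ and every $t\in(0,T]$, that with positive probability the noise stay — uniformly on $[0,t]$ — within $\epsilon$ of a worst-case reference trajectory, namely the Schrödinger-picture noise whose interaction-picture image is frozen along a single axis at maximal intensity. Concretely, with $U_{0,S}$ the noiseless propagator and $X_0$ a fixed Hermitian matrix of unit Frobenius norm (traceless, if one is comparing special unitaries), this reference trajectory is
\begin{align}
    \gamma(s)\;=\;\Lambda(s)\,U_{0,S}(s)\,X_0\,U_{0,S}(s)^{\dagger},\qquad s\in[0,t],
\end{align}
so that $U_{0,S}(s)^{\dagger}\gamma(s)U_{0,S}(s)=\Lambda(s)X_0$; this is exactly the configuration that, via the secondary interaction-picture argument behind \cref{thm:tightness}, drives $d_F(U_I(t,\omega),\id)$ up to $\int_0^t\Lambda(s)\,ds$. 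Since $H_{0,S}$ is bounded, $U_{0,S}$ is Lipschitz, so $\gamma$ is continuous once $\Lambda$ is; if $\Lambda$ is only lower semicontinuous one replaces it by a continuous minorant and lets it increase to $\Lambda$ by monotone convergence, which does not weaken the conclusion and keeps $\|\gamma(s)\|_F=\Lambda(s)$ strictly inside $[0,E)$, so that $\gamma$ takes values in $\mathcal H_E$ (recall $\Lambda(s)\le E$, as the noise paths live in $\mathcal H_E$).

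The plan is to show that the law $\mu$ of $H_{1,S}(\cdot,\omega)$ on $C^0([0,t],\mathcal H_E)$ charges the open tube $\mathcal T_\epsilon(\gamma)=\{f:\sup_{s\in[0,t]}\|f(s)-\gamma(s)\|_F<\epsilon\}$; since $s\mapsto U_{0,S}(s)$ is continuous, conjugation is a homeomorphism of path space, so this is equivalent to the interaction-picture statement in \cref{def:WCNC1}. First I would fix a partition $0=s_0<\dots<s_N=t$ fine enough that the oscillation of $\gamma$ on each $[s_j,s_{j+1}]$ stays below $\epsilon/4$, and bound $\mu(\mathcal T_\epsilon(\gamma))$ from below by the probability of the event
\begin{align}
    A=\Big\{\max_j\|H_{1,S}(s_j)-\gamma(s_j)\|_F<\tfrac{\epsilon}{4}\Big\}\ \cap\ \Big\{\max_j\ \mathrm{osc}\big(H_{1,S}(\cdot),[s_j,s_{j+1}]\big)<\tfrac{\epsilon}{4}\Big\},
\end{align}
which by the triangle inequality is contained in $\mathcal T_\epsilon(\gamma)$. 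The first event is cylindrical, hence has positive probability by hypothesis; the content is to intersect it with the small-oscillation event without killing positivity. Here the Markov property enters: I would expand $\CP[A]$ as an iterated conditional probability over the partition, reducing matters to a lower bound, uniform over starting points $x$ in a small ball around $\gamma(s_j)$, on the one-step probability
\begin{align}
    p_j(x)=\CP_x\Big[\|H_{1,S}(s_{j+1})-\gamma(s_{j+1})\|_F<\tfrac{\epsilon}{4}\ \text{and}\ \mathrm{osc}\big(H_{1,S},[s_j,s_{j+1}]\big)<\tfrac{\epsilon}{4}\Big].
\end{align}
Positivity of the ``landing'' constraint in $p_j(x)$ is the cylinder-set hypothesis applied to the process restarted at $s_j$; the oscillation constraint is controlled by a stochastic equicontinuity estimate: since $\overline{\mathcal H_E}$ is compact and the sample paths are almost surely uniformly continuous on $[0,T]$, one obtains $\sup_x\CP_x[\mathrm{osc}(H_{1,S},[s_j,s_{j+1}])\ge\epsilon/4]\to0$ as the mesh shrinks, so for a fine enough partition $p_j(x)\ge c_j>0$ uniformly in $x$, whence $\CP[A]\ge\CP[H_{1,S}(0)\in B_{\epsilon/4}(\gamma(0))]\prod_j c_j>0$.

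I expect the main obstacle to be exactly this last step — upgrading positivity on \emph{cylindrical} sets (control at finitely many times) to positivity on the \emph{tube} (uniform-in-time control). Two points need genuine care. First, the small-oscillation bound must be uniform over the starting point $x$ in a neighbourhood of $\gamma(s_j)$, which forces one to combine the almost-sure modulus of continuity of the paths with compactness of $\overline{\mathcal H_E}$ and a Feller-type continuity of $x\mapsto\CP_x$; establishing this continuity — or circumventing it by disintegrating the cylinder-positivity hypothesis over starting points — is where the strong Markov property together with path continuity is really used, and is the technically delicate part. Second, one must ensure that conditioning on the cylinder events $\{H_{1,S}(s_j)\approx\gamma(s_j)\}$ does not spoil the oscillation estimate on the next block; this is handled by keeping the two constraints separate inside each factor $p_j(x)$ as above and invoking the Markov property so that the conditional law on $[s_j,s_{j+1}]$ depends on the past only through $x=H_{1,S}(s_j)$. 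Everything else — continuity of $U_{0,S}$ and $\gamma$, the homeomorphism between the two pictures, and the reduction to a finite partition — is routine.
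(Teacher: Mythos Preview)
Your overall architecture matches the paper's: fix the worst-case reference trajectory, inner-approximate the $\epsilon$-tube around it by a finite chain of overlapping pieces indexed by a partition $0=t_0<\dots<t_N=T$, and then use the (strong) Markov property at the partition times to turn positivity of the one-step events into positivity of their intersection. The genuine difference is in what ``cylindrical set'' is taken to mean, and hence in where the analytic burden sits.

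You read ``cylindrical'' as a finite-dimensional constraint (controlling $H_{1,S}$ only at the partition points $s_j$), so to land inside the tube you must intersect with small-oscillation events on each subinterval; this is exactly the step you flag as delicate, and it does require something beyond the stated hypotheses (a Feller-type uniformity of $x\mapsto\CP_x$ or an equicontinuity statement for the transition kernels). The paper instead takes ``cylindrical set'' to mean a tube segment in path space, i.e.\ $\{f:f(t)\in B\ \text{for all }t\in[t_i,t_{i+1}]\}$ for a fixed open ball $B$. With that reading, the positivity of each event $F_i=\{H_{1,S}(t)\in B_i\ \forall t\in[t_i,t_{i+1}]\}$ is literally the hypothesis, so no oscillation estimate is needed: one only has to choose the balls $B_i$ so that consecutive ones overlap --- which follows from the Lipschitz bound on $t\mapsto U_{0,S}(t)H_{1,S}(0,\omega)U_{0,S}(t)^\dagger$ coming from $\|H_{0,S}\|_F\le E$ --- and then chain $\CP[\bigcap_i F_i]>0$ via the Markov property at the $t_i$. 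So the paper's route is shorter but leans on the stronger reading of the hypothesis; your route works under the more standard (weaker) reading at the price of the uniform oscillation estimate you correctly isolate as the crux.

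One minor further difference: you fix a deterministic axis $X_0$ and force $H_{1,S}(0)$ near $\Lambda(0)X_0$, whereas the paper centres its balls at the random conjugates $U_{0,S}(t_i)H_{1,S}(0,\omega)U_{0,S}(t_i)^\dagger$ of the actual initial value. Both are compatible with the WCNC as stated in \cref{def:WCNC1}; the paper's choice just avoids having to select $X_0$ and keeps the reference trajectory $\omega$-dependent.
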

\begin{proof}
(sketch) The Markov property combined with the positivity condition imply that for any sufficiently regular path $\gamma$ in the space of Hermitian matrices, $\CP(\norm{H_{1,S}(t, \omega) - \gamma(t)} < \epsilon \; \forall t)$ is positive. The result then follows. 
\end{proof}

This theorem allows us to conclude that many well known stochastic processes satisfy the WCNC. In particular, if $X(t, \cdot)$ is an unbounded Markov process (e.g. a Gaussian Markov process), and $f$ is a continuous bounded function, then $H_{0,S}(t, \cdot) = f(X(t, \cdot))$ satisfies the WCNC for any $H_0(t)$ whose image lies in $\mathcal{H}_E$. See remark \ref{rk:nonemptycondition} for more detail.

We then introduce the notion of a control-noise coupling, which quantifies the relationship between the control field and the noise. In particular, we consider control-noise couplings induced by a Riemannian \textit{noise metric} denoted by $g_\Lambda$. Noise metrics are analogous to the complexity metrics introduced by Nielsen et. al. \cite{nielsen2006geometric, nielsen2006optimal, nielsen2006quantum, dowling2008geometry}. For these couplings, solving the robust control problem (see \cref{def:robustcontrol}) is equivalent to finding a $g_\Lambda$-geodesic between $\id$ and $V$, in a situation analogous to \cite{nielsen2006geometric, nielsen2006optimal, nielsen2006quantum, dowling2008geometry}.

\vskip 6pt
\begin{restatable}[]{thm}{geodesic}
\label{thm:geodesic}
Suppose that $H_{1,S}(t, \cdot)$ satisfies the control-dependent WCNC (see \cref{def:WCNC3}) with respect to all $H_{0,S}(t) \in \mathcal{H}_E$, and suppose $\Lambda(t) = \sqrt{g_\Lambda (H _{0,S}(t), H_{0,S}(t))}$. Then solving the robust optimal control problem of synthesising a unitary target $V$ with respect to the cost function $c(U_S(T), H_{0,S}(\cdot), T, V) = \ess\,\sup_\omega \norm{U_S(T, \omega)-V}$
is equivalent to finding a length-minimising $g_\Lambda$-geodesic on $SU(n)$.
\end{restatable}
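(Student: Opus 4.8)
\emph{Proof strategy.} The plan is to rewrite the cost functional entirely in terms of the interaction-picture error, evaluate that error exactly using the bound of \cref{thm:geometricerror} together with its tightness \cref{thm:tightness}, and then recognise the resulting expression as a Riemannian arclength on $SU(n)$.

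First I would use that synthesising $V$ means the noiseless propagator hits the target, $U_{0,S}(T) = V$. Factoring the solution of \eqref{eq:full_evolution_equation} as $U_S(t,\omega) = U_{0,S}(t)\,U_I(t,\omega)$ and using the unitary (bi-)invariance of the relevant distance, the cost collapses to the interaction-picture error,
\begin{align}
c\big(U_S(T), H_{0,S}(\cdot), T, V\big) = \ess\,\sup_\omega \norm{U_I(T,\omega) - \id},
\end{align}
which --- up to the choice of matrix norm, see below --- is exactly the quantity bounded in \cref{thm:geometricerror} and shown to be tight in \cref{thm:tightness}. Now \cref{thm:geometricerror} gives $\ess\,\sup_\omega d_F(U_I(T,\omega),\id) \le \int_0^T \Lambda(s)\,ds$, while the control-dependent WCNC of \cref{def:WCNC3} --- assumed here for \emph{every} admissible $H_{0,S}(t)\in\mathcal{H}_E$ --- lets \cref{thm:tightness} apply to whichever control is used; combining the two turns the inequality into the equality $\ess\,\sup_\omega d_F(U_I(T,\omega),\id) = \int_0^T \Lambda(s)\,ds$. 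Hence, for each admissible control, the cost equals the deterministic quantity $\int_0^T \Lambda(s)\,ds$, independent of the particular noise realisation.

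It then remains to read $\int_0^T \Lambda(s)\,ds$ as a length. By the coupling hypothesis $\Lambda(s) = \sqrt{g_\Lambda(H_{0,S}(s),H_{0,S}(s))}$, and since the control Hamiltonian is the right-logarithmic derivative of the curve it generates, $H_{0,S}(s) = i\,\dot U_{0,S}(s)\,U_{0,S}(s)^{-1}$, the integrand $\sqrt{g_\Lambda(H_{0,S}(s),H_{0,S}(s))}$ is --- when $g_\Lambda$ is taken right-invariant, i.e. determined by its value at $\id$ --- precisely the $g_\Lambda$-speed of the curve $s\mapsto U_{0,S}(s)$. Therefore $\int_0^T \Lambda(s)\,ds = \mathrm{length}_{g_\Lambda}\big(U_{0,S}([0,T])\big)$, the $g_\Lambda$-length of the path traced by the noiseless propagator from $\id$ to $V$. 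Since every sufficiently regular curve from $\id$ to $V$ in $SU(n)$ is realised as $U_{0,S}$ for some admissible control and conversely, minimising $c$ over admissible controls that synthesise $V$ is the same optimisation problem as minimising $g_\Lambda$-length over curves joining $\id$ and $V$; a minimiser is a length-minimising $g_\Lambda$-geodesic, and one exists since $SU(n)$ is compact (Hopf--Rinow).

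The substantive obstacle is the second step: upgrading the probabilistic tightness of \cref{thm:tightness} --- which only asserts a positive probability of nearly attaining $\int_0^T\Lambda$ --- to the exact essential-supremum equality, which is precisely why the \emph{control-dependent} form of the WCNC (tightness uniformly over the admissible control class one optimises over) is assumed rather than the plain version. A secondary point is the mismatch between the operator norm $\norm{\cdot}$ in the definition of $c$ and the Frobenius distance $d_F$ in the bounds: for an exact equivalence one should either phrase $c$ with the Frobenius norm throughout, or check that both functionals are minimised by the same controls (with the operator norm one only obtains that the geodesic length upper-bounds $c$). Finally, one must verify that the admissible control set --- locally integrable $\mathcal{H}_E$-valued $H_{0,S}$ subject to $\Lambda(s)^2 = g_\Lambda(H_{0,S}(s),H_{0,S}(s))$ --- matches the class of curves over which the geodesic problem is posed, and that this constraint is compatible with the $\mathcal{H}_E$ bound; these are bookkeeping matters rather than genuine difficulties.
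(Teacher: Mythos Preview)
Your proposal is correct and follows essentially the same route as the paper: reduce the cost to the interaction-picture error via $U_{0,S}(T)=V$ and unitary invariance, bound it by $\int_0^T\Lambda(s)\,ds$ via \cref{thm:geometricerror}, use \cref{thm:tightness} (applied control-by-control, which is exactly what the control-dependent WCNC permits) to saturate the bound in essential supremum, and then identify $\int_0^T\Lambda(s)\,ds$ as the $g_\Lambda$-length of the path generated by $H_{0,S}$. Your write-up is in fact more careful than the paper's own proof on two points the paper glosses over: the passage from ``positive probability of being $\epsilon$-close'' to ``essential supremum attained'' (which is immediate, since $\CP[X>\int\Lambda-\epsilon]>0$ for all $\epsilon$ forces $\ess\sup X\ge\int\Lambda$), and the operator-vs-Frobenius norm mismatch, which the paper handles only implicitly via $\norm{\cdot}_{op}\le\norm{\cdot}_F\le d_F$.
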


\begin{proof}
    (Sketch) Plugging $\Lambda(t) = \sqrt{g_\Lambda (H _{0,S}(t), H_{0,S}(t))}$ into theorem \ref{thm:geometricerror} we see that the cost function of a given control is bounded by the length of the path traced out by the control in $SU(n)$, measured with respect to $g_\Lambda$. By theorem \ref{thm:tightness}, minimising the length of this path (i.e. finding a geodesic) minimises the cost function. 
\end{proof}

\section{Random Ordinary Differential Equations}

In this section, we review the basic theory of RODEs, following \cite{HanandKloeden, strand1970random}. See \cref{rk:ito-RODE} and \cref{rk:uncorrelated} for a comparison of RODEs and SDEs. 

\begin{definition}\label{def:RODEmain}
    Let $\eta:[0,T] \times \Omega \rightarrow \R^m$ be an $\R^m$-valued random process defined on a probability space $(\Omega, \mathcal{F}, \CP)$, and let $f:[0,T] \times \R^n \times \R^m\rightarrow \R^n$ be continuous. A random differential equation is an equation of the form

    \begin{align}\label{eq:RODEmain}
        \frac{dx(t, \cdot)}{dt} = f(t, x(t, \cdot),\eta(t, \cdot))\,; \quad x(0, \cdot) = x_0. 
    \end{align}
    We say that \eqref{eq:RODEmain} is linear if $f$ is linear in the state space (second argument). It follows that any linear RODE can be written as
    \begin{align}\label{eq:linearRODE}
        \frac{dx(t, \cdot)}{dt} = A(t, \cdot)x(t, \cdot); \quad x(0, \cdot) = x_0. 
    \end{align}
    Where $A: [0,T] \times \Omega \rightarrow \mathrm{Mat}_n(\R)$ is an $n\times n$ matrix-valued random process. 
\end{definition}

\begin{remark}
Our definition of an RODE is more restrictive than the definitions used by other authors, notably \cite{strand1970random}. Other authors define RODEs using the concept of a random function, which allows $f$ to depend directly on $\omega \in \Omega$. But this requires a significantly more technical treatment in the non-linear case. 

\end{remark}

There are several different ways to interpret the derivative on the left-hand side of an RODE, each giving rise to a different notion of what it means for a random process $x(t, \omega)$ to be a solution. 

\begin{definition}
We say that a random process $x(t, \omega)$ is a sample path (SP) solution on $\lbrack 0, T \rbrack$ if for almost all $\omega \in \Omega$, $x(\cdot, \omega)$ is absolutely continuous (in $t$),

\begin{align}\label{eq:SPRandom}
    \frac{\partial x(t, \omega)}{\partial t} = f(t,x(t,\omega), \eta(t, \omega))
\end{align}

for almost all $t \in \lbrack 0, T \rbrack$ and $x(0, \omega) = x_0(\omega)$. 
\end{definition}

\begin{remark}\label{rk:measurability}There is a slight subtlety to this definition. We have assumed that any SP solution is a priori a random process, i.e. that the mapping $x(t, \cdot):\Omega \rightarrow \R^n$ is measurable for all $t$. It turns out that this is not necessary. If we can solve the SP equation \eqref{eq:SPRandom} for almost all $\omega \in \Omega$, then the mapping

\begin{align*}
    x(t, \cdot): \Omega &\rightarrow \R^n \\
                \omega &\mapsto x(t, \omega)
\end{align*}
is defined almost everywhere, and moreover is measurable. For a proof see 2.1.2 in \cite{HanandKloeden}. 
\end{remark}
\begin{definition}
We say that $x$ is a mean-square ($L^2$) solution if $x$ is an $L^2$ random process (i.e. if $x(t, \cdot) \in L^2(\Omega)$ for all $t$), $x$ is strongly absolutely continuous as a function $x: [0,T] \rightarrow L^2(\Omega)$ and 

\begin{equation}\label{eq:L2_deriv}
\left[\frac{x(t+h,\cdot)-x(t,\cdot)}{h}\right]\overset{h\rightarrow0}{\longrightarrow}f(t,x(t,\cdot),\eta(t,\cdot)) \quad \text{in } L^{2}\left(\Omega\right)
\end{equation}

for almost all $t \in [0,T]$, and $x(0, \cdot) = x_0(\cdot)$ in $L^2(\Omega)$.

If $x(t, \cdot)$ is an $L^2$-process, we denote by $\hat{x}(t)$ its equivalence class in $L^2(\Omega)$. Clearly, $\hat{x}:[0,T] \rightarrow L^2(\Omega)$.
\end{definition}

Due to the absolute continuity assumptions, both the property of being an SP solution and the property of being an $L^2$ solution have equivalent integral formulations:

In the first case, $x(t, \cdot)$ is a SP solution if and only if 

\begin{equation}
    x(t, \omega) = x_0(\omega) + \int_0^t f(s, x(s, \omega), \eta(s, \omega)) ds
\end{equation}

for almost all $\omega \in \Omega$ and for almost all $[0,T]$. The integral above is just the usual Lebesgue integral of the trajectory indexed by $\omega$.

Similarly, an $L^2$ process $x(t, \cdot)$ is an $L^2$ solution if and only if

\begin{equation}
    x(t, \cdot) = x_0 + \int_0^t f(s, x(s, \cdot), \eta(t, \omega)) ds
\end{equation}

for all $t \in T$. Here, the integral denotes the \textit{Bochner integral}.

The relationship between sample path and $L^2$-solutions is less obvious. We have the following theorem:

\begin{theorem}\label{thm:SP-L2}
Suppose $\hat{x}(t)$ is an $L^2$-solution. Then there exists a jointly measurable map $x : \lbrack 0,T \rbrack \times \Omega \rightarrow \R^n$, equivalent to $\hat{x}$ in $L^2(\Omega)$, which is a SP solution.

Conversely, suppose $x : [0, T] \times \Omega \rightarrow \R^n$ is a SP solution. Then its equivalence class in $L^2(\Omega)$, $\hat{x}: [0,T] \rightarrow L^2 (\Omega)$ is an $L^2$ solution if furthermore
\begin{equation}
    \int_a^b \norm{f(s, \cdot, x(s, \cdot))}_{L^2} ds < \infty.
\end{equation}
\end{theorem}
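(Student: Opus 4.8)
The plan is to pass back and forth between the two integral formulations stated just above the theorem — the $\omega$-wise Lebesgue integral equation characterising SP solutions and the Bochner integral equation characterising $L^2$ solutions — the bridge between them being a Fubini-type identification of the Bochner integral of an $L^2(\Omega)$-valued map with the pointwise (in $\omega$) Lebesgue integral of a jointly measurable representative. With that identification in hand, both directions reduce to bookkeeping with Minkowski's integral inequality and the Lebesgue differentiation theorem for Bochner-integrable functions.

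For the forward direction, start from the Bochner identity $\hat{x}(t) = \hat{x}_0 + \int_0^t f(s,\hat{x}(s),\hat{\eta}(s))\,ds$. Since $\hat x:[0,T]\to L^2(\Omega)$ is continuous it is strongly measurable, so I would approximate it uniformly in $t$ by $L^2(\Omega)$-valued step functions — each with an evident jointly measurable representative on $[0,T]\times\Omega$ — and extract an a.e.\ convergent subsequence to obtain a jointly measurable $x:[0,T]\times\Omega\to\R^n$ with $x(t,\cdot)=\hat x(t)$ in $L^2(\Omega)$ for a.e.\ $t$. Joint measurability of $x$ and $\eta$ together with continuity of $f$ makes $(s,\omega)\mapsto f(s,x(s,\omega),\eta(s,\omega))$ jointly measurable, and it lies in $L^1([0,T];L^2(\Omega))$ precisely because the Bochner integral defining the $L^2$-solution exists. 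The Fubini-type theorem then gives, for each fixed $t$, $\big(\int_0^t f(s,\hat x(s),\hat\eta(s))\,ds\big)(\omega)=\int_0^t f(s,x(s,\omega),\eta(s,\omega))\,ds$ for a.e.\ $\omega$; redefining $x$ by this integral equation (jointly measurable, and still equivalent to $\hat x$ for every $t$ by continuity) makes the $\omega$-trajectories manifestly absolutely continuous. Hence for a.e.\ $\omega$, $x(\cdot,\omega)$ is $x_0(\omega)$ plus the Lebesgue integral of a locally integrable function, so it solves \eqref{eq:SPRandom} for a.e.\ $t$ — i.e.\ $x$ is an SP solution.

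For the converse, assume $x$ is an SP solution with $\int_a^b\norm{f(s,\cdot,x(s,\cdot))}_{L^2}\,ds<\infty$. Minkowski's integral inequality applied to the SP integral equation gives $\norm{x(t,\cdot)}_{L^2}\le\norm{x_0}_{L^2}+\int_0^t\norm{f(s,\cdot,x(s,\cdot))}_{L^2}\,ds<\infty$, so $x$ is an $L^2$-process, and the same hypothesis makes $s\mapsto f(s,x(s,\cdot),\eta(s,\cdot))$ Bochner integrable into $L^2(\Omega)$. Applying the Fubini-type identification in the opposite direction, the Bochner integral $\int_{t_1}^{t_2}f(s,x(s,\cdot),\eta(s,\cdot))\,ds$ is represented $\omega$-a.e.\ by $\int_{t_1}^{t_2}f(s,x(s,\omega),\eta(s,\omega))\,ds = x(t_2,\omega)-x(t_1,\omega)$, so $\hat x(t_2)-\hat x(t_1)=\int_{t_1}^{t_2}f(s,\hat x(s),\hat\eta(s))\,ds$ in $L^2(\Omega)$; since the integrand is in $L^1([0,T];L^2(\Omega))$ this is exactly strong absolute continuity of $\hat x$. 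Finally, the Lebesgue differentiation theorem for Bochner integrals (valid since a Bochner integrable map is a.e.\ essentially separably valued) yields $\frac{d}{dt}\hat x(t)=f(t,x(t,\cdot),\eta(t,\cdot))$ in $L^2(\Omega)$ at each Lebesgue point, i.e.\ for a.e.\ $t$, which is \eqref{eq:L2_deriv}; together with $\hat x(0)=\hat x_0$ this establishes the $L^2$-solution property.

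The step I expect to be the main obstacle is the Fubini-type interchange together with the joint-measurability construction in the forward direction: one must verify that the Bochner integral of an $L^2(\Omega)$-valued map genuinely agrees $\omega$-a.e.\ with the scalar Lebesgue integral of a jointly measurable representative, and — more delicately — that the $t$-dependent exceptional null sets can be chosen uniformly, so that the resulting $\omega$-trajectories are honestly absolutely continuous rather than merely satisfying the integral identity separately for each $t$. The remaining estimates are routine; for the underlying measure-theoretic facts I would lean on \cite{HanandKloeden}.
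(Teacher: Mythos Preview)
The paper does not prove this theorem at all: it is stated as a background result in the review section on RODEs, attributed to the literature (the section opens by saying it follows \cite{HanandKloeden, strand1970random}), and the text moves on immediately after the statement without giving any argument. So there is no ``paper's own proof'' to compare against.

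That said, your outline is a sound and standard route to the result, essentially the one found in \cite{strand1970random}: the Fubini--Bochner identification together with the Lebesgue differentiation theorem for Bochner integrals is exactly the right bridge, and your use of Minkowski's integral inequality to check that an SP solution is an $L^2$-process under the stated integrability hypothesis is correct. Your identification of the delicate point --- handling the $t$-dependent null sets uniformly so that the $\omega$-trajectories are genuinely absolutely continuous --- is also accurate; this is precisely where the argument requires care, and your device of \emph{redefining} $x$ by the integral equation itself (which is automatically jointly measurable and produces absolutely continuous trajectories for a.e.\ $\omega$) is the standard way around it. One minor quibble: you should check that $x_0 \in L^2(\Omega)$ in the converse direction before invoking Minkowski, but this is implicit in the SP framework once the integrability hypothesis is in place.
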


In light of this theorem, as well as remark \ref{rk:measurability}, we will from now on move freely between the SP and $L^2$ notions of solutions wherever it is convenient and allowed. 

The main existence theorem for RODEs was proven in \cite{strand1970random} (Theorem 5). We reproduce it here for completeness:

\begin{theorem}\label{thm:RODEPicard}
    Let $\tilde{f}: [0,T] \times L^p(\Omega) \rightarrow L^p(\Omega)$ be the mapping induced by $f$. 

\begin{equation}\label{eq:LpLipschitz}
    \norm{\tilde{f}(t, x)-\tilde{f}(t, y)}_{L^p} \leq k(t) \norm{x - y}_{L^p}
\end{equation}
with $\int_a^b|k(t)|dt < \infty$, then the RODE

\begin{equation}
    \frac{dx(t, \cdot)}{dt} = f(t, x(t, \eta(t, \cdot)); \qquad x(0, \cdot) = x_0
\end{equation}
has a unique $L^p$-solution.
\end{theorem}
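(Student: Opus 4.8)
The plan is to recast the RODE of the statement as an integral fixed-point equation in the Banach space $L^p(\Omega)$ and then apply the Banach fixed-point theorem — essentially the classical Picard--Lindel\"of argument, but with scalars replaced by elements of $L^p(\Omega)$ and the Riemann integral replaced by the Bochner integral. On the space $X := C([0,T], L^p(\Omega))$ of continuous curves into $L^p(\Omega)$, define the operator
\begin{align*}
(\Phi x)(t) := x_0 + \int_0^t \tilde f(s, x(s))\, ds .
\end{align*}
By the integral characterisation of $L^p$-solutions recalled just before \cref{thm:SP-L2}, a curve $\hat x \in X$ is an $L^p$-solution on $[0,T]$ exactly when it is a fixed point of $\Phi$: strong absolute continuity and a.e.\ differentiability of $t \mapsto x_0 + \int_0^t \tilde f(s, x(s))\,ds$ are automatic once the integrand is Bochner integrable, and differentiating then returns \eqref{eq:L2_deriv}.

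First I would check that $\Phi$ maps $X$ into itself. For $x \in X$, the Lipschitz hypothesis \eqref{eq:LpLipschitz} gives the pointwise bound $\norm{\tilde f(s, x(s))}_{L^p} \le |k(s)|\,\norm{x(s)}_{L^p} + \norm{\tilde f(s, x_0)}_{L^p}$, so $s \mapsto \tilde f(s, x(s))$ is dominated by an integrable function, provided $s \mapsto \norm{\tilde f(s, x_0)}_{L^p}$ is integrable — a mild regularity property that follows from the standing continuity hypotheses on $f$ and the driving process $\eta$, and which should be regarded as implicit in the statement. Strong measurability of $s \mapsto \tilde f(s, x(s))$ follows from continuity of $x$ together with continuity of $\tilde f(s, \cdot)$ (itself a consequence of \eqref{eq:LpLipschitz}). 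Hence the Bochner integral exists and $\Phi x \in X$; in fact $\Phi x$ is absolutely continuous as a curve into $L^p(\Omega)$.

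Next I would equip $X$ with the Bielecki weighted norm $\norm{x}_\lambda := \sup_{t \in [0,T]} e^{-\lambda K(t)} \norm{x(t)}_{L^p}$, where $K(t) := \int_0^t |k(s)|\, ds$ and $\lambda > 1$; this is equivalent to the ordinary sup-norm, so $(X, \norm{\cdot}_\lambda)$ is complete. Using \eqref{eq:LpLipschitz} and $\tfrac{d}{ds} e^{\lambda K(s)} = \lambda |k(s)| e^{\lambda K(s)}$ a.e.,
\begin{align*}
\norm{(\Phi x)(t) - (\Phi y)(t)}_{L^p}
&\le \int_0^t |k(s)|\, \norm{x(s) - y(s)}_{L^p}\, ds \\
&\le \norm{x - y}_\lambda \int_0^t |k(s)|\, e^{\lambda K(s)}\, ds
= \tfrac{1}{\lambda}\,\big(e^{\lambda K(t)} - 1\big)\,\norm{x - y}_\lambda ,
\end{align*}
so multiplying by $e^{-\lambda K(t)}$ and taking the supremum over $t$ gives $\norm{\Phi x - \Phi y}_\lambda \le \tfrac{1}{\lambda}\norm{x - y}_\lambda$. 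Thus $\Phi$ is a contraction, and Banach's fixed-point theorem yields a unique fixed point in $X$, i.e.\ a unique $L^p$-solution on $[0,T]$. Equivalently, one may skip the weighted norm and iterate $x_{n+1} := \Phi x_n$ starting from the constant curve $x_0$, proving by induction the estimate $\norm{x_{n+1}(t) - x_n(t)}_{L^p} \le C\, K(t)^n/n!$, so that $\sum_n (x_{n+1} - x_n)$ converges uniformly on $[0,T]$ to the solution.

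The genuinely non-classical part of the argument — and hence the main obstacle — is the functional-analytic bookkeeping rather than the contraction estimate: one must verify strong measurability and Bochner integrability of $s \mapsto \tilde f(s, x(s))$ for every $x \in X$, confirm that the induced Nemytskii-type map $\tilde f$ inherits from the hypotheses on $f$ and $\eta$ enough joint regularity in $(t,x)$ for this to go through (in particular integrability of $t \mapsto \norm{\tilde f(t, x_0)}_{L^p}$), and check that a fixed point of $\Phi$ really is strongly absolutely continuous with $L^p$-derivative \eqref{eq:L2_deriv}, so that it qualifies as an $L^p$-solution in the precise sense of the definition. Once these measure-theoretic points are settled the remainder is routine, and a parallel argument — or an appeal to \cref{thm:SP-L2} — upgrades the solution to a sample-path solution.
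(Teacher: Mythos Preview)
The paper does not actually prove this theorem: it merely reproduces the statement and cites Strand~\cite{strand1970random} (Theorem~5) for the proof. Your Picard--Lindel\"of argument via a Bochner-integral fixed-point operator on $C([0,T],L^p(\Omega))$, with either a Bielecki weighted norm or the usual $K(t)^n/n!$ iteration estimate, is exactly the standard approach and is how Strand's original proof proceeds, so your proposal is correct and in line with the cited reference; your caveats about strong measurability and the integrability of $t\mapsto\norm{\tilde f(t,x_0)}_{L^p}$ are precisely the standing regularity assumptions that Strand imposes.
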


\begin{corollary}\label{cor:essbounded}
The linear RODE
\begin{align}
     \frac{dx(t, \cdot)}{dt} = A(t, \cdot)x(t,\cdot); \qquad x(0, \cdot) = x_0(\cdot)
\end{align}
admits a solution if $A(t,\cdot)$ is essentially bounded. By this we mean that there exists an $M$ such that for all $t \in [0,1]$, $\CP(\norm{A(t, \cdot)} \leq M) = 1$, where $\norm{\cdot}$ denotes any matrix norm.
\end{corollary}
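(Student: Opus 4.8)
The plan is to deduce this directly from \cref{thm:RODEPicard} by verifying that the linear vector field satisfies the required global Lipschitz estimate with an integrable constant. Here $f(t, x, \eta(t,\omega)) = A(t,\omega)x$, so the induced map on $L^p(\Omega)$ is $\tilde{f}(t, x)(\omega) = A(t,\omega)x(\omega)$, and the task reduces to exhibiting an integrable $k(t)$ with $\norm{\tilde{f}(t,x) - \tilde{f}(t,y)}_{L^p} \leq k(t)\,\norm{x-y}_{L^p}$.

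First I would reduce to the operator norm: since all norms on $\mathrm{Mat}_n(\R)$ are equivalent, essential boundedness of $A(t,\cdot)$ with respect to the given matrix norm implies essential boundedness with respect to the induced (operator) norm, i.e. there is a constant $M'$ (a fixed multiple of $M$) with $\CP(\norm{A(t,\cdot)}_{\mathrm{op}} \leq M') = 1$ for every $t \in [0,T]$. Then, for $x, y \in L^p(\Omega)$ and almost every $\omega$, $\norm{A(t,\omega)\big(x(\omega)-y(\omega)\big)} \leq M'\,\norm{x(\omega)-y(\omega)}$; raising to the $p$-th power and integrating over $\Omega$ gives $\norm{\tilde{f}(t,x) - \tilde{f}(t,y)}_{L^p} \leq M'\,\norm{x-y}_{L^p}$, which in particular shows $\tilde{f}(t,\cdot)$ maps $L^p(\Omega)$ into itself. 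Taking $k(t) \equiv M'$, which is trivially integrable over the bounded interval since $\int_0^T M'\,dt = M'T < \infty$, the hypotheses of \cref{thm:RODEPicard} are satisfied, and we obtain a unique $L^p$-solution.

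The estimate itself is routine; the only point deserving a little care is checking that this is genuinely an instance of the RODE framework of \cref{def:RODEmain} — one needs $A$ to be a matrix-valued random process (so that $(t,\omega)\mapsto A(t,\omega)$ has the measurability making $t \mapsto \tilde{f}(t,x)$ a well-defined $L^p(\Omega)$-valued map and the Bochner integral in the integral formulation meaningful), which is exactly what is built into the standing assumption together with essential boundedness. I would also remark that the uniform bound $M'$ forces the sample paths of the solution to be locally absolutely continuous, so \cref{thm:SP-L2} applies and the $L^p$-solution admits a sample-path representative as well.
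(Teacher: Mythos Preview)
Your proof is correct and follows exactly the approach intended by the paper: the corollary is stated immediately after \cref{thm:RODEPicard} with no separate proof, so it is meant to be the direct application you give, namely verifying the $L^p$-Lipschitz condition \eqref{eq:LpLipschitz} with the constant $k(t)\equiv M'$ coming from essential boundedness and norm equivalence. Your additional remarks about measurability and the sample-path representative via \cref{thm:SP-L2} are appropriate elaborations but not strictly required for the corollary as stated.
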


We conclude this section with some general remarks.

\begin{remark}
By identifying $\C^n$ with $\R^{2n}$ we can treat complex-valued RODEs on the same footing as real valued RODEs.
\end{remark}

\begin{remark}\label{rk:compactness}
Corollary \ref{cor:essbounded} suggests that, if we want to obtain solutions to a linear RODE with finite moments up to a given order, we should restrict ourselves to studying RODEs with bounded right-hand side. However, there is one case where it is possible and indeed natural to consider unbounded linear RODEs. If $A(t,\omega) \in \mathfrak{g} \subset \mathrm{Mat}_n(\C)$, where $\mathfrak{g}$ is a Lie subalgebra of $\mathrm{Mat}_n(\C)$ such that the associated Lie group $G \subset GL_n(\C)$ is compact, and $A(t, \omega)$ is continuous in $t$ for almost all $\omega$, then the above RODE admits solutions with finite moments. The proof follows from using the escape lemma to solve the SP problem for each $\omega$, measurability then follows from remark \ref{rk:measurability} and finiteness of moments is evident since $x(t, \omega) \in G$ for all $t$ for almost all $\omega$, and $G$ is compact. 
\end{remark}

\begin{remark}\label{rk:ito-RODE}
Certain types of RODEs can be reinterpreted as an Itô-type SDE: if $W(t, \cdot)$ is the classical Wiener process, and $\mu$ and $\sigma$ are integrable functions, the RODE 

\begin{align}
    \frac{dx(t, \cdot)}{dt}=\mu(x(t,\cdot), t) + \sigma(x(t, \cdot), t)\frac{dW(t, \cdot)}{dt}
\end{align}

is equivalent to the Itô stochastic differential equation

\begin{align}
    dx(t, \cdot)=\mu(x(t,\cdot), t)dt + \sigma(x(t, \cdot), t)dW(t, \cdot).
\end{align}

Of course, $\frac{dW(t, \cdot)}{dt}$ only exists in the sense of distributions. 

\end{remark}

\begin{remark}\label{rk:uncorrelated}
Uncorrelated centered noise dose not affect the dynamics. Consider the RODE

\begin{align}\label{eq:uncorrelated}
\frac{dx(t, \cdot)}{dt} = A(t, \cdot)x(t,\cdot); \qquad x(0, \cdot) = x_0(\cdot)
\end{align}

and suppose that $\E[A(t, \cdot)] = 0$, and $\E[A_{ij}(t)A_{kl}(s)] = 0$ for all $s \neq t$. Then the only $L^2$ solution is given by $x(t) = x_0$ for all $t$. 

First, note that if $B_{ij}(t) = \int_0^t A_{ij}(s)ds$, then $B_{ij}(t) = 0$ in $L^2(\Omega)$ for all $t$, since 

\begin{align}\label{eq:Bochneruncorrelated}
\begin{split}
    \E[B_{ij}(t)^2] = \E[\int_0^tA_{ij}(s_1)ds_1\int_0^tA_{ij}(s_2)ds_2] \\ = \int_0^t \int_0^t\E[A_{ij}(s_1)A_{ij}(s_2)]ds_1ds_2 = 0.
\end{split}
\end{align}

Here, we have used Fubini's theorem to interchange integration with respect to $s_i$ and $\E$, and the fact that, by assumption, $\E[A_{ij}(s_1)A_{ij}(s_2)]$ is supported only on the diagonal $\{s_1 = s_2\}$, which has measure 0. 

Now, using equivalence of $L^2$ and SP solutions (\cref{thm:SP-L2}), we can solve \cref{eq:uncorrelated} using the Magnus expansion:

\begin{align}
    x(t, \omega) = \exp(\sum_{k=1}^\infty M_k)
\end{align}

with

\begin{align}
    M_1 &= \int_0^tA(t_1)dt_1 \\
    M_2 &= \frac{1}{2}\int_0^t \int_0^{t_1} [A(t_1),A(t_2)]dt_1dt_2 \\
        & \vdots
\end{align}

\cref{eq:Bochneruncorrelated} shows that $M_1$ vanishes for all $t$, and a similar argument shows that all the higher order terms vanish too. 

There is however one type of uncorrelated noise that does affect the dynamics, namely the Gaussian white noise of \cref{rk:ito-RODE}, which of course is not an $L^2$-process. Making sense of this rigorously is difficult, requiring techniques from Malliavin calculus. Setting aside technical details, if we denote $G(t, \cdot) = \frac{dW}{dt}(t, \cdot)$, we have that $\E[G(t),G(s)] = \delta(t-s)$. When we try to apply the same argument as in \cref{eq:Bochneruncorrelated}, we instead find that

\begin{align}
\begin{split}
    \E[W(t, \cdot),W(s, \cdot)] &= \int_0^t \int_0^s \delta(t_1-s_1)dt_1ds_1 \\ &= \min(t,s)
\end{split}
\end{align}

as expected. This illustrates the main difference between SDEs and RODEs:

\begin{itemize}
    \item RODEs are suitable for modelling systems subject to bounded noise, since otherwise we cannot in general expect that a solution even exists. If we want to have non-trivial stochastic behaviour, we need the noise to exhibit some degree of correlation at different times.
    \item SDEs are suitable for modelling systems subject to uncorrelated noise. However, in order for the stochastics to be non-trivial, the noise process must live in a more exotic class of stochastic process than the class of $L^2$-processes. Another drawback of SDEs is that, unlike RODEs, they are not amenable to path-wise analysis. 
\end{itemize}

\end{remark}

\subsection{Robust control for RODEs}\label{sec:robust}

Just as there is a notion of control theory for ODEs, there is a natural notion of \textit{robust control} theory for RODEs. Suppose we have an RODE

\begin{equation}
\begin{split}
    \frac{dx(t, \cdot)}{dt} &= f(a_1(t), \dots, a_k(t), x(t, \cdot), \eta(a_1(t), \dots, a_k(t),t, \cdot)) \\ x(0, \cdot) &= x_0 \in L^2(\Omega, \R^n)
\end{split}
\end{equation}

where the $a_j(t)$ are real-valued control functions. Note that even for complex RODEs, we can still assume that the control functions are real valued, by splitting into real and imaginary parts if necessary. Denote by $A$ the function space of admissible control functions, i.e. $a := (a_1(\cdot), \dots, a_k(\cdot)) \in A$. Note also that the noise process $\eta$ can depend on the control functions $a_j(t)$. We refer to this dependency as a \textit{control-noise} coupling.

\begin{definition}\label{def:robustcontrol}
    Given some vector $y$ in $\R^n$, the objective of the robust control problem is to find control functions $a_j(t)$ and a stopping time $T \in \R$ minimising the \textit{cost function} 
\begin{align}
    c(x(T, \cdot), a_j(\cdot), T, y).
\end{align}
\end{definition}

We will mostly restrict our attention to cost functions that depend only on the random variable $x(T, \cdot)$ and the target $y$. We also restrict our attention to the situation where $\E[x(T, \cdot)] = y$, and we assume that we can find a control $a \in A$ such that this occurs. We refer to this last condition as \textit{operator controllability in expectation}.

\section{The Random Schrödinger Equation}

The RODE we will study in this paper is the \textit{random Schrödinger equation} for the time evolution operator:

\begin{align}
\begin{split}
    \frac{d U_S(t, \cdot)}{dt} &= -i (H_{0,S}(t) + H_{1,S}(t, \cdot)) U_S(t, \cdot) \\ U_S(0, \cdot) &= \id 
\end{split}
\end{align}

where the subscript $S$ denotes that we are working in the Schrödinger picture. Here, $H_{0,S}(t)$ is the deterministic part of the Hamiltonian, while $H_{1,S}(t,\cdot)$ is a Hermitian matrix-valued random process. In what follows, we will assume that $H_{1,S}$ is \textit{essentially bounded}, although in view of remark \ref{rk:compactness}, this is not necessary to guarantee the existence of a solution. We may further assume that $H_{1,S}$ is centered, since we can always absorb the drift term into $H_{0,S}$. 

As expected, there is also random Schrödinger equation for (random) states:

\begin{align}
\begin{split}
    \frac{d |\varphi(t, \cdot)\rangle}{dt} &= -i (H_{0,S}(t) + H_{1,S}(t, \cdot)) |\varphi(t, \cdot)\rangle \\ |\varphi(0, \cdot)\rangle &= |\varphi_0(\cdot)\rangle,
\end{split}
\end{align}

and the usual relation between states and operators holds pathwise: $|\varphi(t, \omega)\rangle = U_S(t, \omega)|\varphi(\omega, 0)\rangle$. The following figure shows plots of solutions to the random Schrödinger equation for states, visualised on the Bloch sphere. 

\begin{figure}[H]
  \centering

  \begin{minipage}{0.25\textwidth}
    \includegraphics[width=\linewidth]{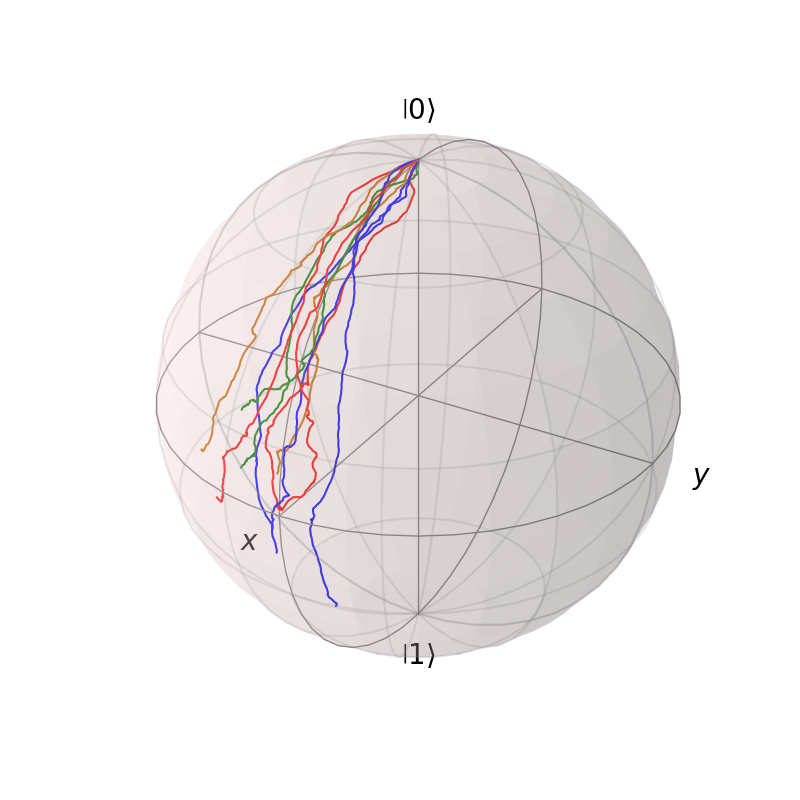}
  \end{minipage}
  \begin{minipage}{0.25\textwidth}
    \includegraphics[width=\linewidth]{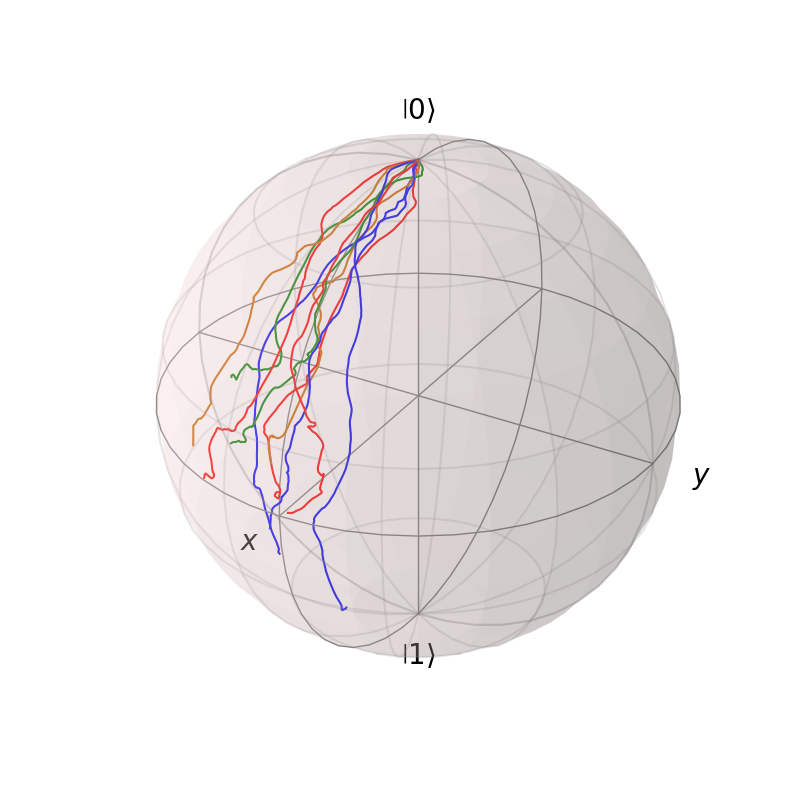}
  \end{minipage}
  \begin{minipage}{0.25\textwidth}
    \includegraphics[width=\linewidth]{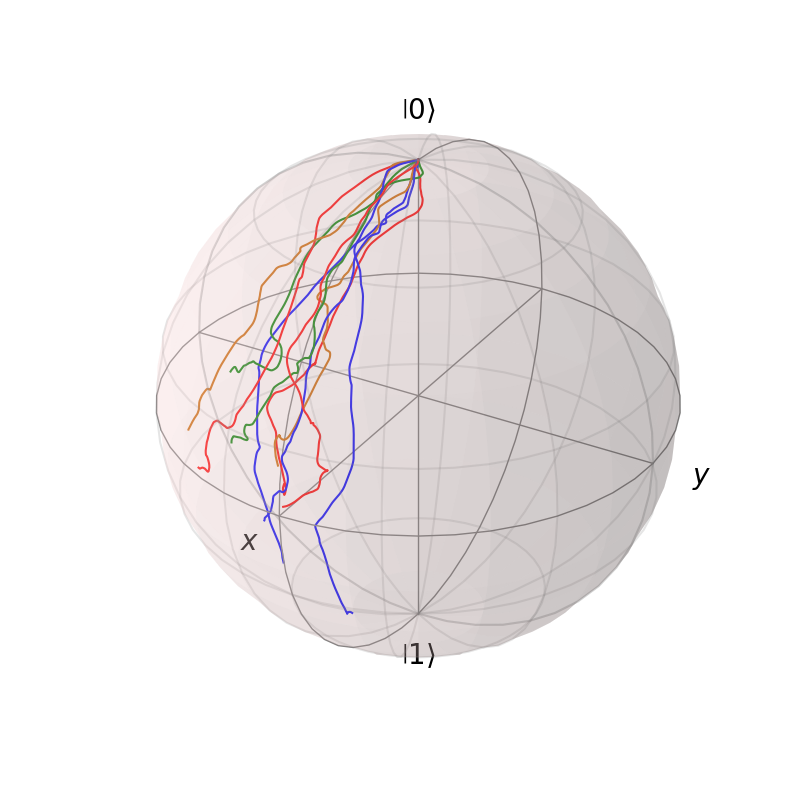}
  \end{minipage}

  \caption{Plots of 10 pathwise solutions to the random Schrödinger equation for a single qubit with $H_{0,S}(t) = cos(t)\sigma_y$ and $H_{1,S}(t,\omega) = \frac{1}{2} M_\nu(t, \cdot)$, where $M_\nu (t, \cdot)$ is a Gaussian process with Matérn covariance. The parameter $\nu$ controls the smoothness of sample paths. We take to $\nu$ to be $0.2, 0.6$ and $1.0$ from left to right. A Gaussian process with Matérn covariance is in general $\lceil \nu \rceil -1$ times $L^2$-differentiable. Trajectories were sampled using sci-kit learn and plotted using QuTiP \cite{qutip}.}
  \label{fig:trajectories_bloch}
\end{figure}

Before discussing the theoretical guarantees for systems modelled by the random Schrödinger equation, we discuss the relationship between the random Schrödinger equation and other more traditional noise models.

\subsection{Mixed state behaviour}

Since $H_{0,S}(t) + H_{1,S}(t, \omega)$ is Hermitian, the sample trajectories $U_S(t, \omega)$ are always unitary. However, by considering ensemble averages over the probability space $\Omega$, we obtain the mixed-state behaviour that we expect open quantum systems to exhibit. 

Consider a (normalised) random state $\psi \in L^2(\Omega, \mathcal{H})$. From this we can form a random density operator

\begin{align}
\begin{split}
    \rho: \Omega &\rightarrow \mathcal{D}(\mathcal{H}) := \{\rho \in \mathcal{B}(\mathcal{H}), \, \rho =\rho^\dagger, \, \rho \succeq 0, \mathrm{Tr}(\rho) = 1\}  \\
    \omega &\mapsto |\psi(\omega)\rangle\langle\psi(\omega)|.
\end{split}
\end{align}

For each $\omega \in \Omega$, $\rho(\omega)$ is a pure state, however, taking an ensemble average, we can obtain a matrix $\hat{\rho} = \E_\omega(\rho(\omega))$. 

Now, $\hat{\rho}$ is in fact a density matrix, since Tr$(\hat{\rho}) = 1$, and it is in general a mixed state. To see this, note that $\rho(\omega)$ is supported on the space of pure states,

\begin{align}
    \mathcal{D}(\mathcal{H}) \cap \{\mathrm{Tr}\rho^2 = 1\}.
\end{align}

We know that the expectation $\E(\rho)$ is contained in the convex hull of the support. Since the space of pure states contains no non-trivial convex subsets, it is clear that $\E(\rho)$ is a mixed state as soon as $\rho(\omega)$ is non-constant. 

\subsection{Generating standard error processes}

Despite its apparent simplicity, the random Schrödinger equation captures most ``standard" quantum errors considered in quantum information theory and quantum error correction.

Consider first the case of a bit flip channel on a single qubit. This can be described in terms of the Kraus maps $K_1 = \sqrt{p}X$, $K_1 = \sqrt{1-p}\,\mathbb{1}$, for some parameter $p \in [0,1]$. It is not difficult to write down a random Hamiltonian that generates this channel upon taking ensemble averages. If we set

\begin{align}
\begin{split}
    H_{0,S}(t, \omega) &= 
    \begin{pmatrix}
        0 & \frac{\pi}{2} \\
        -\frac{\pi}{2} & 0
    \end{pmatrix} \quad \text{with probability}\, p \\ 
    H_{0,S}(t, \omega) &= \begin{pmatrix}
        0 & 0 \\
        0 & 0
    \end{pmatrix} \quad \text{with probability}\, 1-p,
\end{split}
\end{align}

then evolving for one unit of time generates the bit flip channel. Of course this choice is non-unique. 

Indeed, the same trick works for any mixed unitary error admitting Kraus maps $\{p_jU_j\}$ with $\sum_jp_j = 1$. We can choose Hamiltonians $H_j(t)$ generating $U_j$, and let $H_{1,S}(t, \omega)$ be a random whose trajectory is given by $H_j(t)$ with probability $p_j$. Again, this choice is (very) non-unique. Furthermore, one could generate arbitrary quantum channels by evolving via non-Hermitian noise processes, as for instance in \cite{delcampo}. However, this is beyond the scope of this paper. 

Another common error encountered in the literature is the so-called $\frac{1}{f}$ noise. This fits naturally into our framework. For instance, in \cite{oneoverf}, the authors consider the random Hamiltonian

\begin{align}
    \hat H_q = \frac{1}{2}\mathbf{B}\cdot\vec{\sigma}.
\end{align}

Here, $\mathbf{B}$ denotes the magnetic field, and $\vec{\sigma}$ is the Pauli-vector. The authors model the  magnetic field as

\begin{align}
    \mathbf{B} = \mathbf{B}_0(t) + \mathbf{b}(t, \omega),
\end{align}

where $\mathbf{b}(t, \omega)$ is a stochastic process, which coincides with the random Schrödinger formalism so long as $\mathbf{b}(t, \omega)$ is essentially bounded. 

\subsection{Basic error estimate}

In this section, $\norm{\cdot}$ may be taken to denote the operator norm induced by the Hilbert norm on $\mathcal{H}$, although exactly the same results holds for the Frobenius norm $\norm{\cdot}_F$. The first natural question to ask is how the noise term affects the solution. In other words, if we denote by $U_{0,S}(t)$ the propagator associated to $H_{0,S}(t)$, how does the random variable $\norm{U_S(t, \cdot) - U_{0,S}(t)}$ evolve over time? 

The random-variable $\norm{U_S(t, \cdot) - U_{0,S}(t)}$ can be thought of as the error induced by perturbing the ideal control Hamiltonian $H_{0,S}$ by the additive noise $H_{1,S}(t, \cdot)$. We have the following theorem:

\basicerror*

\begin{remark}
This is much better than we can expect using standard results such as Gronwall's inequality. This is because the dynamics are constrained to $SU(n)$, which is compact.
\end{remark}

\begin{proof}
The proof proceeds via a transformation to the interaction picture. For a Schrödinger state $| \varphi_S(t) \rangle$, define the corresponding interaction state 

\begin{equation}\label{eq:StateInter}
    | \varphi_I(t) \rangle = U_{0,S}(t)^\dagger | \varphi_S(t) \rangle.
\end{equation}

For operators, we recall that for a Hermitian operator $A_S$ in the Schrödinger picture, the corresponding operator in the interaction picture is given by

\begin{equation}\label{eq:OpInter}
    A_I = U_{0,S}(t)^\dagger A_S U_{0,S}(t). 
\end{equation}

It follows from the standard theory that time evolution operator $U_I(t)$ satisfies the interaction picture Schrödinger equation

\begin{equation}\label{eq:TE_interaction}
    \frac{d}{dt}U_I(t) = -iH_{1,I}(t) U_I(t),
\end{equation}

where $H_{1,I} = U_{0,S}(t)^\dagger H_{1,S}(t) U_{0,S}(t)$.

Of course, if $H_{1,S}$ is 0, i.e. if there is no noise, then there is no state evolution (and only operators evolve with time). Moreover, since the transformation to the interaction picture is unitary, we have that 

\begin{align}
    \norm{U_S(t, \cdot) - U_{0,S}(t)} = \norm{U_I(t, \cdot) - \id}.
\end{align}

The quantity on the right-hand side is easy to estimate. Fixing $\omega \in \Omega$, we rewrite the interaction picture Schrödinger equation as an integral equation:

\begin{align}
\begin{split}
    &U_I(t, \omega) - \id =- i \int_0^t H_{1, I}(s, \omega)U_I(s, \omega) ds \\
    &= -i \int_0^t U_{0,S}(s)^\dagger H_{1,S}(s, \omega) U_{0,S}(s) U_I(s, \omega)ds \,.
\end{split}
\end{align}

Hence, 

\begin{align}
\begin{split}
    &\norm{U_I(t, \omega)-\id} = \norm{\int_0^t U_{0,S}(s)^\dagger H_{1,S}(s, \omega) U_{0,S}(s) U_I(s, \omega)ds } \\
    &\leq \int_0^t \norm{U_{0,S}(s)^\dagger H_{1,S}(s, \omega) U_{0,S}(s) U_I(s, \omega)}ds \\
    &= \int_0^t \norm{U_{0,S}(s)^\dagger} \norm{H_{1,S}(s, \omega)} \norm{U_{0,S}(s)} \norm {U_I(s, \omega)} ds \\
    &= \int_0^t \norm{H_{1,S}(s, \omega)} ds \leq Kt,\label{eq:basic_error}
\end{split}
\end{align}

where the last inequality holds for almost all $\omega$.

\end{proof}

\begin{remark}
    We have a similar result for any submultiplicative matrix norm, not just the operator norm or Frobenius norm, potentially at the cost of replacing $K$ by a larger constant, since multiplying by a unitary matrix will not preserve arbitrary matrix norms. However, since the only norms we consider in this paper are the operator norm and the Frobenius norm, both of which are preserved by unitary matrices.
\end{remark} 
\subsection{A geometric error estimate}

Unfortunately, the inequality proven in the previous section is not tight, since the left hand side measures the distance between $U_I(t, \omega)$ and the identity in the space of matrices, whereas the integral on the right hand side measures the length of a path on $SU(n)$.  In order to get a tighter bound, we need to work intrinsically on $SU(n)$. 

When working intrinsically on $SU(n)$, it is most natural to use the Frobenius norm $\norm{\cdot}_F$ on matrices, since the associated inner product pulls back via the inclusion $SU(n) \rightarrow M_n(\C)$ to the unique bi-invariant metric on $SU(n)$ that makes the inclusion an isometric embedding. 

Let us assume that $\norm{H_{1, I}(t)}_F \leq K$ almost surely for all $t$. For $U, V \in SU(n)$, let $d_F(U,V)$ denote the distance between $U$ and $V$ on $SU(n)$ measured with respect to the metric induced by the Frobenius norm (also known as the Killing metric).

We have the following result:

\begin{align}
\begin{split}
    \norm{U_I(t, \omega) - \id}_{op}
    &\leq \norm{U_I(t, \omega) - \id}_F \\
    &\leq d_F(U_I(t,\omega), \id) \\
    &\leq \int_0^t \norm{H_{1,I}(s, \omega )U_I(s, \omega)}_Fds \\
    &= \int_0^t \norm{H_{1,I}(s, \omega)}_Fds \leq Kt \quad a.s..
\end{split}
\end{align}

Of course, we may wish to study noise whose amplitude varies with time. Defining \\ $\Lambda(t) := \ess \, \sup_\omega \norm{H_{1,I}(t, \omega)}_F$, the previous is easily adapted to show that

\begin{align}\label{eq:bestbound}
     d_F(U_I(t, \omega), \id) \leq \int_0^t \norm{H_{1,I}(s, \omega)}_Fds \leq \int_0^t \Lambda(s) ds .
\end{align}

We have proven the following theorem:

\geometricerror*

\subsection{Tightness of error bound}

In fact, for certain noise processes, the error bound \eqref{eq:bestbound} is optimal, i.e. the noise can push the error arbitrarily close to the bound given by \eqref{eq:bestbound}. 

\tightness*

\begin{definition}\label{def:WCNC1}
We say that $H_{1,I}(t, \cdot)$ satisfies the worst-case noise condition with respect to $H_{0,S}$ if the following two conditions hold. The first condition is that:

\begin{equation}
    \CP(H_{1,I}(t, \cdot) = 0) = 0
\end{equation}
for all $t$. In this case, we write 

\begin{equation}
    H_{1,I}(t, \omega) = \lambda(t, \omega) \hat{H}_I(t, \omega),
\end{equation} 
where $\norm{\hat{H}_{1,I}(t, \omega)}_F = 1$. 

The second condition is that for all $\epsilon > 0$ and for all $t \geq 0$, 

\begin{align}
\begin{split}
    &\CP[|\lambda(t, \omega) - \Lambda(t) |< \epsilon \\ &\mathrm{and} \norm{\hat{H}_{1,I}(t, \omega) - \hat{H}_{1,I}(0, \omega)}_F < \epsilon] > 0.
\end{split}
\end{align}

\end{definition}

For convenience, we denote
\begin{align}
\begin{split}
    E_\epsilon^I = \{&\omega \in \Omega \; \mathrm{s.t.} \;|\lambda(t, \omega) - \Lambda(t)|< \epsilon \; \\ &\mathrm{and} \; \norm{\hat{H}_{1,I}(t, \omega) - \hat{H}_{1,I}(0, \omega)}_F < \epsilon\}.
\end{split}
\end{align}

If $H_{1,I}(t, \cdot)$ satisfied the WCNC with respect to $H_{0,I}(t,\cdot)$, then we have that $\CP(E^I_\epsilon)>0$ for all $\epsilon > 0$.

Suppose that $E_\epsilon^I$ occurs, that is, $\omega \in \Omega$. Then $H_{1,I}(t, \omega) = \Lambda(t)\hat{H}_{1,I}(0, \omega) + G(t, \omega)$, where $\norm{G(t)}_F$ is $\mathcal{O}(\epsilon)$. 

First, define $\tilde{U}_I(t) = \exp(-i\int_0^t \Lambda(s)\hat{H}_{1,I}(0)ds)$. Note that $\tilde{U}_I(t)$ defines a geodesic on $SU(n)$ and moreover, that $d_F(I, \tilde{U}_I(t)) = \int_0^t \Lambda(s) ds$. Hence, $\tilde{U}_I(t)$ achieves the bound \eqref{eq:bestbound}.

Next, we will show that $d_F(U_I(t, \omega), \tilde{U}_I(t))$ is $\mathcal{O}(\epsilon)$. One relatively straightforward way to do this is via a second interaction picture transformation. Define $\tilde{H}_{1,I}(t) = \Lambda(t)\hat{H}_{1,I}(0, \omega)$. Notice that $\tilde{H}_{1,I}(t)$ commutes with itself at all times, and hence by definition $\tilde{U}_{1,I}(t))$ is the unitary propagator associated to $\tilde{H}_{1,I}(t, \omega)$. Now, perform another transformation to the interaction picture, and denote operators in the double interaction picture by a subscript $I'$. By definition, $H_{1,I'}(t, \omega) = \tilde{U}_I(t, \omega)^\dagger G(t, \omega)\tilde{U}(t, \omega)$, and letting $U_{I'}(t, \omega)$ denote the time evolution operator in the (double) interaction picture, we have the equation

\begin{align}
    \frac{d U_{I'}(t, \omega)}{dt} = - iH_{1,I'}(t, \omega)U_{I'}(t, \omega).
\end{align}

Now, since $\tilde{U}_I(t, \omega)$ is unitary, we have that $\norm{H_{1,I'}(t, \omega)}_F = \mathcal{O}(\epsilon)$, and we also know that $d_F(U_I(t, \omega), \tilde{U}_I(t, \omega)) = d_F(U_{I'}(t, \omega), \id)$. But since $H_{1,I'}(t, \omega)$ is $\mathcal{O}(\epsilon)$, we see that $d_F(U_{I'}(t, \omega), \id)$ is $\mathcal{O}(\epsilon)$, for example by considering the Magnus expansion. 

Putting this all together and using the triangle inequality for $d_F(\cdot,\cdot)$ on $SU(n)$, we see that

\begin{align}
     \int_0^t \Lambda(s) ds  - d_F(I, U_I(t)) < d_F(U_I(t),  \tilde{U}_I(t)) = \mathcal{O}(\epsilon). 
\end{align}

But since by assumption, for all $\epsilon$, $\CP(E_\epsilon^I) > 0$, we conclude that

\begin{align}
    \CP \left[ \int_0^t \Lambda(s) ds - d_F(U_I(t, \omega), \id) < \epsilon \right] > 0
\end{align}

for all $\epsilon$.

\subsection{Deriving a condition on the Schrödinger noise Hamiltonian}

One slightly unsatisfactory part of this reasoning is that the worst case noise condition is defined in terms of the interaction-picture noise Hamiltonian $H_{1,I}(t, \omega)$, which of course depends on the $H_{0,S}$. In the context of control theory, this is problematic. In this section, we derive a sufficient condition on the Schrödinger noise Hamiltonian.

First note that if $H_{1,I}(t, \omega) = 0$ if and only if $H_{1,S}(t, \omega) = 0$, and if $H_{1,I}(t, \omega) = \lambda(t, \omega)\hat{H}_{1,I}(t, \omega)$, then similarly $H_{1,S}(t, \omega) = \lambda(t, \omega) \hat{H}_{1,S}(t, \omega)$. Now note that, by the definition of the interaction picture,

\begin{align*}
\hat{H}_{1,I}(t, \omega) &\in B_{F,\epsilon}(\hat{H}_{1,I}(0, \omega)) \\ &\Updownarrow \\
\hat{H}_{1,S}(t, \omega) &\in  B_{F,\epsilon}(U_{0,S}(t) \hat{H}_{1,S}(0, \omega) U_{0,S}(t)^\dagger).
\end{align*}

The next step is to remove the dependence on the control $U_{0,S}(t)$. Denote by $CT$ the space of all allowable control trajectories $[0,T] \rightarrow SU(n)$, i.e. the space of all trajectories generated by admissible controls $H_{0,S}(t)$. Of course, precisely which controls are admissible depends on the context. 

This motivates the following definition:

\begin{definition}\label{def:WCNC2}
We say that $H_{1,S}(t, \omega)$ satisfies the worst-case noise condition with respect to $CT$ if 
\begin{align}\label{eq:WCNC}
\begin{split}
    \CP[\hat{H}_{1,S}(t, \omega) \in  B_{F,\epsilon}(U_{0,S}(t) \hat{H}_{1,S}(0, \omega) U_{0,S}(t)^\dagger); \\ \; \Lambda(t)-\lambda(t,w) < \epsilon ;\ \forall t ] 
\end{split}
\end{align}
in strictly positive for all $\epsilon$. A simpler way to view this condition is that $H_{1,S}$ satisfies the WCNC with respect to $CT$ if and only if $H_{1,S}$ satisfies the WCNC with respect to all $H_{0,S}(t)$ generating trajectories in $CT$.
\end{definition}

Of course, given a class of admissible control trajectories $CT$, we might ask which noise processes satisfy this condition. In general, this is a difficult question, but we will show that in most cases this class contains several physically reasonable noise processes. 

For bounded controls $H_{0,S}(t)$, there is a sufficient condition to check whether $H_{1,S}(t, \cdot)$ satisfies the WCNC with repect to $H_{0,S}(t)$.

\markov*

\begin{proof}
Suppose $H_{1,S}$ satisfies the assumptions stated above. 

We begin by observing that the WCNC is equivalent to the condition that :

\begin{equation}
\begin{split}
\CP [ H_{1,S}(t, \omega) \in B_{F, \Lambda(t) \epsilon}(&U_{0,S}(t) H_{1,S}(0, \omega)U_{0,S}(t)^\dagger) \; \forall t ] > 0.
\end{split}
\end{equation}

The next step is to find an inner approximation of 

\begin{align}
\begin{split}
    \bigcup_{t \in [0,T]}B_{F, \Lambda(t) \epsilon}(&U_{0,S}(t) H_{1,S}(0, \omega)U_{0,S}(t)^\dagger) \times \{t\} \\ \subset \mathcal{H} \times [0,T]
\end{split}
\end{align}

by a finite concatenation of cylindrical sets. This is always possible, since $\frac{d}{dt} U_{0,S}(t) H_{1,S}(0, \omega)U_{0,S}(t)^\dagger$ is bounded by $E\lambda(0, \omega)$. Precisely, there exists a partition $0 = t_0 < \cdots <t_K=T$ of $[0,T]$ and $\epsilon_i < \epsilon$ such that: 

\begin{align}\label{eq:innerapprox}
\begin{split}
    \bigcup_{i=0}^K \bigcup_{t \in [t_i, t_{i+1}]}B_{F, \Lambda(t_i)\epsilon_i}(U_{0,S}(t_i)H_{1,S}(0, \omega)U_{0,S}(t_i)^\dagger) \times \{t\} \\ \subset \bigcup_{t \in [0,T]}B_{F, \Lambda(t) \epsilon}(U_{0,S}(t) H_{1,S}(0, \omega)U_{0,S}(t)^\dagger) \times \{t\}
\end{split}
\end{align}

and such that

\begin{align}\label{eq:intersection}
\begin{split}
&B_{F, \Lambda(t_i)\epsilon_i}(U_{0,S}(t_i)H_{1,S}(0, \omega)U_0(t_i)^\dagger) \, \\ &\cap \, B_{F, \Lambda(t_{i+1})\epsilon_{i+1}}(U_{0,S}(t_{i+1})H_{1,S}(0, \omega)U_{0,S}(t_{i+1})^\dagger).
\end{split}
\end{align}

is non-empty.

To ease notation, let 
\begin{equation}
    B_i = B_{F, \Lambda(t_i)\epsilon_i}(U_{0,S}(t_i)H_{1,S}(0, \omega)U_{0,S}(t_i)^\dagger)
\end{equation}

for all $i = 0, \dots, K-1$. Now, note that

\begin{align}
\begin{split}
    \CP\left[H_{1,S}(t, \omega)([0,T]) \subset \bigcup_{i=0}^K \bigcup_{t \in [T_i, T_{i+1}]}B_i \times \{t\}\right] \\
    = \CP\left[H_{1,S}(t, \omega) \in B_i\; \text{for} \,t \in [t_i, t_{i+1}] \; \forall i\right].
\end{split}
\end{align}
Denote by $F_i$ the event 

\begin{equation}
    H_{1,S}(t, \omega) \in B_i\; \text{for} \,t \in [t_i, t_{i+1}].
\end{equation}

By the assumption that the path measure associated to $H_{1,S}$ is positive on cylindrical sets, $\CP(E_i)>0$ for all $i$. By the strong Markov assumption, we have that:
\begin{align}
    \CP[F_{i+1}|F_i] = \CP[F_{i+1}|H_{1,S}(t_i, \omega) \in B_{i+1}\cap B_i].
\end{align}
Now, for all $i$, $\CP[H_{1,S}(t_i, \omega) \in B_{i+1}\cap B_i]]$ is non-zero. By the definition of conditional probability, we conclude that
\begin{align}
    \CP\left[H_{1,S}(t, \omega) \in B_i\; \text{for} \,t \in [t_i, t_{i+1}] \; \forall i\right] > 0.
\end{align}
But since 
\begin{align}
\begin{split}
    &\bigcup_{i=0}^K \bigcup_{t \in [t_i, t_{i+1}]}B_i \times \{t\} \\ &\subset \bigcup_{t \in [0,T]}B_{F, \Lambda(t) \epsilon}(U_{0,S}(t) H_{1,S}(0, \omega)U_{0,S}(t)^\dagger) \times \{t\} \\ &\subset \mathcal{H} \times [0,T]
\end{split}
\end{align}
the result follows. 
\end{proof}

\begin{remark}\label{rk:nonemptycondition}
Identify the space of Hermitian matrices with $\R^k$ for the appropriate $k$ Suppose that $X(t, \cdot)$ is an $k$-dimensional unbounded Markov process, for example a Wiener process, an Ornstein-Uhlenbeck process, or a geometric Brownian motion. Let $f:\R^k \rightarrow B_E(0)$ be a continuous bijection. Then it is a standard fact from stochastic analysis that $f(X_t)$ is still a Markov process. Moreover, it is straightforward to check that if the path measure associated $X_t$ is positive on cylindrical sets in $[0,T]\times \R^k$, then the path measure associated to $f(X_t)$ is positive on cylindrical sets in $[0,T] \times \mathcal{H}_E \subset [0,T] \times \R^k$. 

Hence, for example, in one dimension, letting $W(t, \cdot)$ denote the Wiener process, the random process
\begin{align}
    \frac{2E}{\pi} \arctan(W(t, \cdot))
\end{align}
satisfies the assumptions of \cref{thm:markov} and hence the WCNC. 

We leave it to the reader to come up with an $n-$dimensional example by applying their favourite continuous bijection $\R^n \leftrightarrow B_E(0)$ to their favourite Markov process that satisfies the positivity condition on cylindrical sets. As a heuristic, one expects that most Markov processes with almost surely continuous sample paths will satisfy this last condition. 
\end{remark}

\section{Robust Quantum Control}

\subsection{Control noise-couplings}

In this final section, we consider the problem of synthesising a target unitary $V$ while minimising the worst-case error $\ess\,\sup_\omega\norm{U_S(t, \omega) - V}$, in the presence of a control noise coupling $H_{1,S}(t, \omega) = H_{1,S}(H_{0,S}(t),t, \omega)$. We will restrict our attention to optimising over control Hamiltonians $H_{0,S}(t)$ generating $V$ exactly in the absence of noise. 

\begin{remark}
    It would be tempting to talk about control Hamiltonians $H_{0,S}$ generating $V$ in expectation. But this would be incorrect, unless the expectation is taken to be a \textit{Fréchet mean}.
\end{remark}

There are many natural couplings one could consider. The most comprehensive treatment of control noise couplings is given in \cite{Kallush_2014} and \cite{muller_et_al}, in the context of the stochastic Schrödinger equations, though the ideas easily translate to the RODE setting. 

In each of these papers, the amplitude of the the noise is assumed to be proportional to the intensity of the control function, and in \cite{muller_et_al}, the authors allow for both multiple noise terms and multiple control terms. It is the latter approach that we consider here. 

We are interested in situations where moving in certain unfavourable Hamiltonian directions induces more noise than moving in more favourable directions. For example, in the multi-qubit setting, we generally expect that applying a Hamiltonian  corresponding to multiple qubit operations will induce more noise than applying a Hamiltonian corresponding to one and two-qubit interactions, as in \cite{dowling2008geometry, nielsen2006geometric,nielsen2006optimal, nielsen2006quantum}. These couplings are natural even in the single qubit case. For example, when controlling a single spin qubit, \cite{PhysRevD.100.046020} suggests a situation where a magnetic field can be easily applied in the $x-y$ plane, but only applied with more difficulty in the $z$ axis. 

\subsection{Couplings via noise metrics}

We would like a coupling that captures both the proportionality of the amplitude of the noise to the intensity of the control functions, while also allowing for a ``directional dependency". Fix an orthonormal global frame $\{H_j\}$. Then any control field can be written as a sum $H_{0,S} = \sum_j h_j(t) H_j$. We will consider couplings $H_{1,S}(t, \omega) = H_{1,S}(H_{0,S}, t, \omega)$ such that

\begin{align}
\begin{split}
    \ess \, \sup_\omega \norm{H_{1,S}(t, \omega)}_F = \Lambda(t) =\Lambda(H_{0,S}(t))\\ = \sqrt{ \sum_j l_j h_j(t)^2}.
\end{split}
\end{align}

If, for example, moving in the $H_j$ direction induces $a$-times as much noise as moving in the $H_k$ direction, then we should set $\Lambda_k = a^2 \Lambda_j$, and so on. Of course, this means that $\Lambda(H_{0,S}(t))$ is actually induced by a Riemannian metric $g_\Lambda$ on $SU(n)$. With respect to the orthonormal global frame $\{H_j\}$, $g_\Lambda$ is represented by the diagonal matrix

\begin{align}
    \begin{pmatrix}l_1 & & \\ & \ddots & \\ & & l_m\end{pmatrix}.
\end{align}

We call such a metric a \textit{noise metric}. Couplings arising from such a metric satisfy both the property of proportionality of the amplitude of the noise to the intensity of the signal, as well as allowing for directional dependence. Moreover, as alluded to in \cite{muller_et_al}, the components of $g_\Lambda$ can be either calculated or inferred experimentally.  

Next, recall that by \cref{thm:geometricerror}, we know that

\begin{align}
    \norm{U_S(t, \omega)-V}_F \leq  d_F(U_S(t, \omega), V) \leq \int_0^t  \norm{H_1(s, \omega)}_Fds
\end{align}

for almsot all $\omega$.

But since, by definition, $\norm{H_{1,S}(t, \omega)} \leq \Lambda(t) = \sqrt{g_\Lambda(H_{0,S}(t), H_{0,S}(t))}$, we see that 

\begin{equation}
\begin{split}
     \norm{U_S(t, \omega)-V}_F \leq \int_0^t\sqrt{g_\Lambda(H_{0,S}(s), H_{0,S}(s))}ds\\ = l_{g_\Lambda}(\gamma)
     \label{eq:robust}
\end{split}
\end{equation}

where $l_{g_\Lambda}(\gamma)$ is the length, with respect to $g_\Lambda$, of the path on $SU(n)$ generated by the control field $H_{0,S}$.

This would suggest that minimising the error 

\begin{align}
    \ess\,\sup_\omega \norm{U_S(t, \omega)-V}_F
\end{align}

is equivalent to finding a length-minimising $g_\Lambda$-geodesic from the identity to $V$. It turns out that this is true, as we might expect from \cref{thm:tightness}. However, we cannot apply \cref{thm:tightness} directly, due to the presence of the control-noise coupling. This motivates the following definition:

\begin{definition}\label{def:WCNC3}
We say that $H_{1,S}(H_{0,S}(t), t, \omega)$ satisfies the \textit{control-dependent worst-case noise condition} if, for each $U_0 \in CT$ generated by a control Hamiltonian $H_{0,S}(t)$, $H_{1,S}(H_{0,S}(t), t, \omega)$ satisfies the WCNC with respect to $H_0(t)$, as in \cref{def:WCNC1}.
\end{definition}

\begin{remark}
We observe that if the noise is independent of the control, we recover the WCNC with respect to $CT$.     
\end{remark} 


With these definitions in place, we can state the main theorem of this section. 

\geodesic*

\begin{proof} First note that minimising $d_F(U_S, V)$ minimises  $\norm{U_S(t, \omega)-V}_F$ The result then follows by applying \cref{thm:tightness} to the Schrödinger RODE for each $H_{0,S}(t)$.
\end{proof}

\begin{figure}[H]
  \centering

  \begin{minipage}{0.25\textwidth}
    \includegraphics[width=\linewidth]{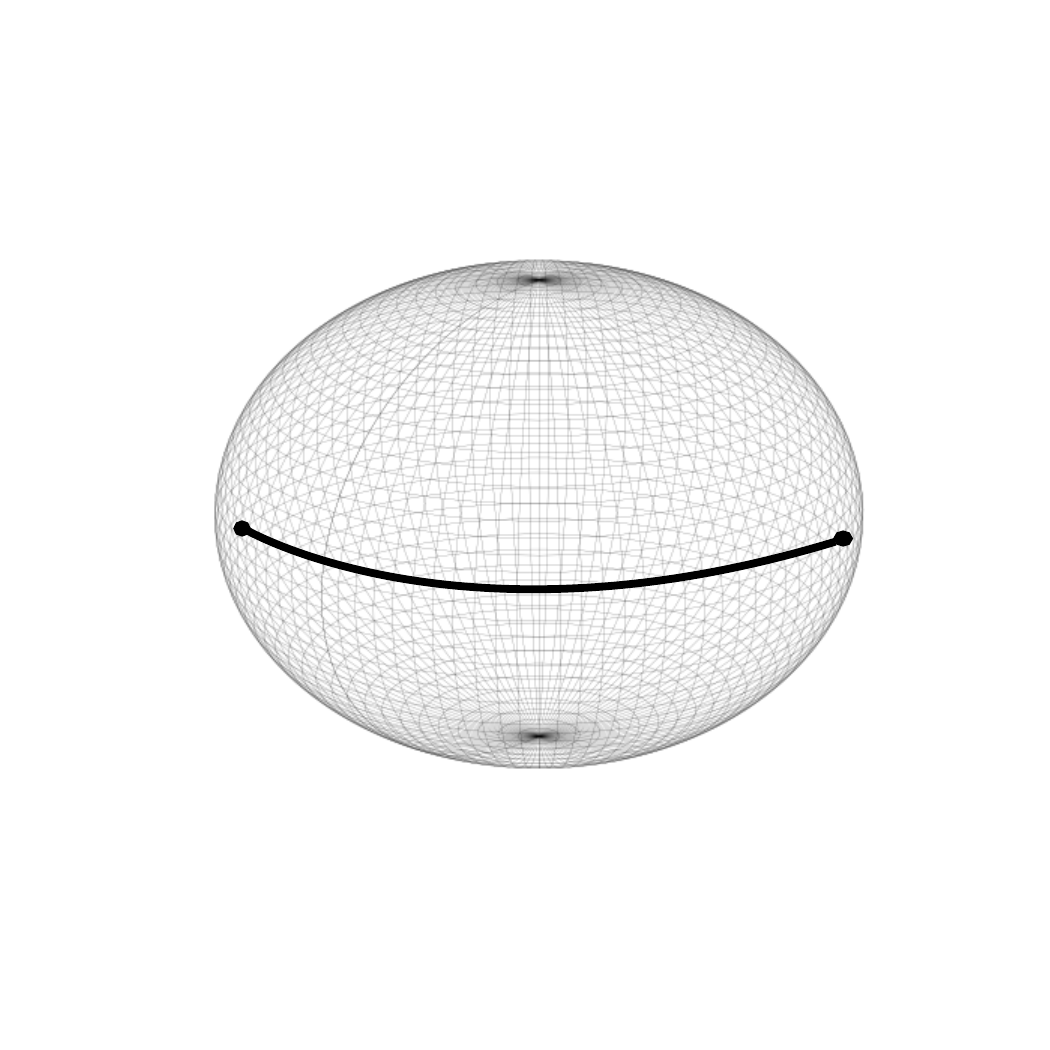}
  \end{minipage}

  \begin{minipage}{0.25\textwidth}
    \includegraphics[width=\linewidth]{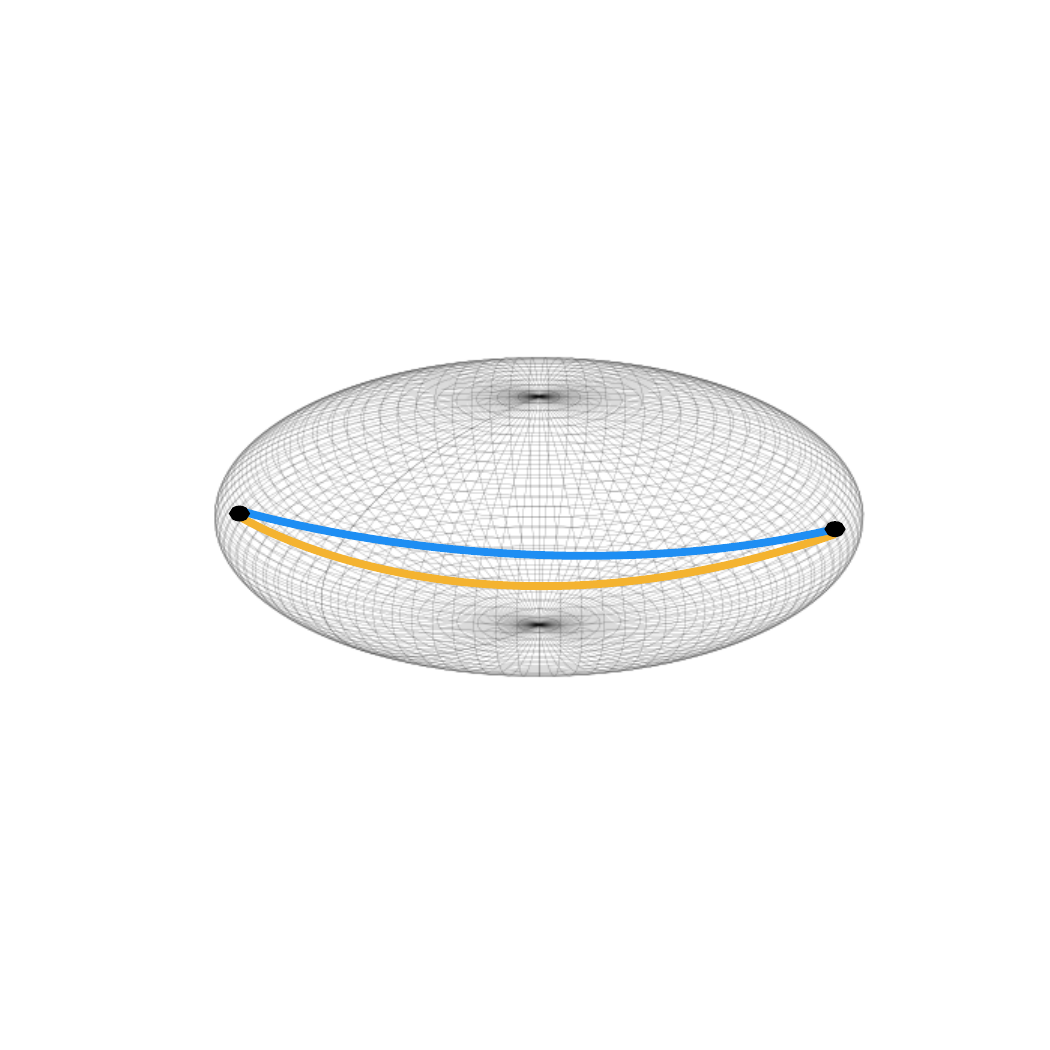}
  \end{minipage}
  
  \begin{minipage}{0.25\textwidth}
    \includegraphics[width=\linewidth]{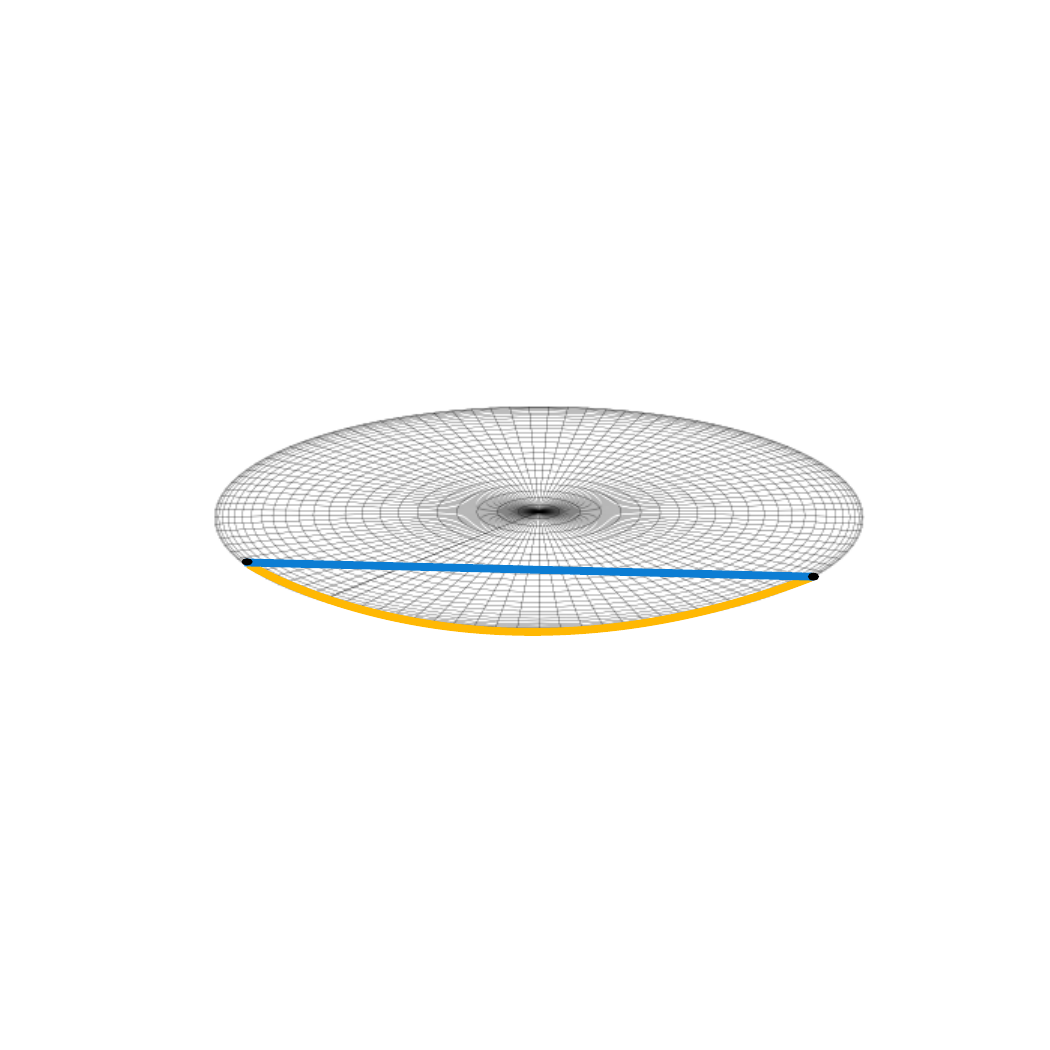}
  \end{minipage}

  \caption{Schematic of a deformation of $SU(n)$ corresponding to penalising one direction. The picture on the left represents (a hyperplane section of) $SU(n)$ with the usual round metric, showing two ``equatorial" points joined by a geodesic. The middle picture shows a deformed $SU(2)$, with the blue line representing the geodesic with respect to the warped metric, and the orange line shows the image of the geodesic with respect to the round metric. The right hand picture shows the limiting case, where one direction has been ``infinitely penalised" effectively reducing the number of tangent directions (and hence the dimension) by one. Note that this diagram is merely a schematic: each $g_\Lambda$ is a homogeneous metric, meaning the curvature should be identical at each point. Unfortunately, it is not possible to satisfactorily depict this in a sketch. For a more detailed discussion, see \cite{PhysRevD.100.046020}.
  \label{fig:deformed}}
\end{figure}

\subsection{Calculating $g_\Lambda$-geodesics}

In theory, calculating geodesics with respect to  $g_\Lambda$ connecting two points $U$ and $V$ on $SU(n)$ is straightforward. We simply need to calculate the Christoffel symbols $\Gamma^a_{bc}$ of the Levi-Civita connection associated to $g_\Lambda$, and then pass the geodesic equation

\begin{align}
    &\ddot{x}^a + \Gamma^a_{bc}\dot{x}^b\dot{x}^c = 0 \\
    &x(0) = U, \; x(1) = V
\end{align}

where $x(t)$ denotes the geodesic path, to an ODE solver capable of handling Dirichlet boundary conditions (for instance via shooting methods). Now, although the geodesic equation is formally a second order equation, it is actually a first order equation in $\dot{x}$. This means that it is tempting to calculate the Christoffel symbols with respect to some orthogonal (or indeed orthonormal) frame of vector fields $\{e_a\}$ in $SU(n)$. With respect to such a basis, and writing $g_\Lambda = g$ to simplify notation, the Christoffel symbols are given by

\begin{align}
\Gamma^a_{bc} = \frac{1}{2}g^{da}\big[e_b(g_{cd}) + e_c(g_{db}) - e_d(g_{bc}) + C_{dbc} + C_{cdb}+ C_{bdc}\big]. 
\end{align}

Now, by the homogeneity of $g$, the first three terms vanish, since $g$ is constant with respect to this frame. Moroever, taking $\{e_a\}$ to be (normalised) generalised Pauli vector fields, computing the structure constants is straightforward. However, it turns out that this approach does not work for our purposes: while we can easily obtain the geodesic equation for $\dot{x} = \dot{x}^ae_a$, we cannot impose Dirichlet boundary conditions $x(0) = U$, $x(1) = V$ in these coordinates. 

If we want to impose Dirichlet boundary conditions, we need to work with respect to a coordinate basis. Working in a coordinate basis still leads to some simplifications, since the structure constants $C_{abc}$ vanish due to symmetry of mixed partials. However, expressing $g_{ab}$ with respect to coordinates on $SU(n)$ is delicate, since we in general expect to have a description of $g = g_\Lambda$ with respect to a basis $\mathfrak{su}(n) = TSU(n)$ of generalised Pauli matrices. 

Recall that the exponential map $\mathfrak{su}(n) \rightarrow SU(n)$ is a local diffeomorphism, and since $\mathfrak{su}(n) \cong \R^{n^2-1}$, the local inverse of the exponential map, corresponding a choice of logarithm $\xi = log\,U$ gives local coordinates on $SU(n)$. 

Now, in these coordinates, standard results from Lie theory, see for example section 6.2 of \cite{toronto_notes} allow us to derive expressions for left-invariant vector fields on $SU(n)$ with respect to these coordinates. Precisely, given $\eta \in \mathfrak{su}(n)$, there is an associated left-invariant invariant vector field $\eta^L$ on $SU(n)$. On an open set $A \subset \mathfrak{su}(n)$ such that $exp:A \rightarrow SU(n)$ is a diffeomorphism onto its image, the coordinate representation is given by pulling back $\eta^L$ by the exponential map. We then have a formula (see e.g. appendix C.4. of \cite{meinrenken}) in terms of the adjoint homomorphism:

\begin{align}
    (exp|_A)^*(\eta^L)|_\xi = \frac{ad_\xi}{1 - exp(-ad_\xi)}\eta.
\end{align}

Now, note that with respect to the coordinates, which we denote $\xi^a$, we have that $g_{ab} = g(\frac{\partial}{\partial \xi^a}, \frac{\partial}{\partial \xi^b})$, and so we can express $g$ with respect to these vector fields, and hence compute (approximately) the Christoffel symbols. 
In the context of optimising one, two and three qubit entangling gates, this approach is tractable, but  it becomes  impractical for large systems. For a $k$-qubit system, we need to perform this procedure on $SU(2^k)$, whose dimension grows exponentially with respect to the number of qubits. For this reason, numerical solution of this equation is impractical for large numbers of qubits without additional assumptions. 

When working in this approach, we also need to contend with two sources of numerical error: clearly, solving the geodesic equation via numerical methods will induce some numerical error, but perhaps more fundamentally, in order to write down the geodesic equation itself, we are obliged to truncate a power series expression for the generalised Pauli-vector fields.

Nevertheless, for larger systems, Nielsen and Dowling \cite{nielsen2006geometric} present an alternative method for calculating geodesics, based on the \textit{lifted Hamilton-Jacobi-Bellman equation} and the notion of a \textit{geodesic derivative}. The basic idea is to compute the geodesics for the round (Killing) metric, which correspond to time-independent Hamiltonians, and can hence be computed by finding a matrix logarithm of the target gate $V$, and then deforming these geodesics with to geodesics of $g_\Lambda$. 

\subsection{The sub-Riemannian limit}

Of course, even if one could solve the geodesic equation efficiently and accurately for a large number of qubits, the resulting fidelity-optimal Hamiltonian would in general contain many high-weight generalised Pauli matrices that are impracticable or indeed impossible to apply to any physical realisation of the system. In most physical systems, we can only apply certain Hamiltonians that lie in the span of a subbundle $B \hookrightarrow TSU(n)$ of the tangent bundle spanned by certain low-weight Pauli matrices. It is a classical theorem in control theory and sub-Riemannian geometry, due to Chow and Rashevskii, that if this sub-bundle generates the whole tangent bundle under the Lie bracket, then any target can be reached by a path whose tangent lies in the subbundle. The subbundle $B$ is said to be \textit{bracket-generating}, or \textit{totally non-integrable} (by contrast with subbundles which are closed under the Lie bracket, which are referred to as integrable). 

In this context, it makes sense to look for length-minimising curves between the identity and the target whose tangent vector lies in $B$. This gives rise to the so-called Carnot-Carathéodory metric: 

\begin{align}
    d(U,V) = inf_\gamma \int_0^1\sqrt{g(\gamma(t),\gamma(t)}\,dt \\
    \dot{\gamma}(t) \in B, \quad \gamma(0) = U, \quad \gamma(1) = V.
\end{align}

The Carnot-Carathéodory metric on $SU(n)$ can be thought of as the limit of a noise metric when certain Hamiltonian directions are infinitely penalised. In \cite{subRiemannian}, the authors make this claim rigorous. The work has important numerical and practical (hardware) implications. 

From a numerical perspective, computing the sub-Riemannian geodesics (i.e. minimisers of the Carnot-Carathéodory distance) should in general be easier than computing the geodesics of the noise metrics introduced here (or, equivalently, of the complexity metrics appearing in \cite{nielsen2006geometric}). This is because sub-Riemannian geodesics are ``sparse" in the following sense: if $B$ denotes the bundle of one and two qubit entangling Hamiltonians (corresponding to weight one and two generalised Pauli matrices) on $k$ qubits, then we need to compute only $\mathcal{O}(k^2)$ coefficients, since there are only $\begin{pmatrix}
    k \\ 2
\end{pmatrix} + \begin{pmatrix}
    k \\ 1
\end{pmatrix}$ generalised Pauli matrices of weight one or two. On the other hand, computing the geodesics of noise metric requires us to compute $2^k$ coefficients. 

From a practical perspective, the sub-Riemannian approach also has a significant advantage, since applying a Hamiltonian to a large number of qubits at once is either impractical or impossible. Computing sub-Riemannian geodesics allows one to restrict the optimal control problem to the space of Hamiltonians that are actually available to to the controller, removing the need for the idealised assumption that the controller is able to apply any Hamiltonian to the system. Of course, the sub-Riemannian approach and the penalty factor approach can be applied at the same time in a natural way. Letting $B$ denote the subbundle of available Hamiltonians, one can deform the metric on $B$ to penalise some of the available directions. This allows us to distinguish ``good", ``bad" and ``impossible" directions in the control problem.

\section{Noise-informed quantum control without calculating geodesics}

It turns out that we can use the insight provided by theorem \ref{thm:tightness} even when we cannot explicitly compute geodesics, using standard techniques from the theory of quantum optimal control. The basic idea is to minimise the error due to the noise by adding a soft constraint in the cost function of the quantum optimal control problem. We refer to this approach noise-informed quantum control.  Moreover, in this setting we can easily incorporate a drift Hamiltonian.

Given $\{H_j\} \in \mathfrak{su}(n)$ and a target $V \in SU(n)$, we seek control functions $u_j(t):[0,T] \rightarrow \R$ solving the robust optimal control problem 

\begin{align}
\begin{split}
    &\mathrm{min}\;c(U(T, \cdot), V) \\
    &\text{subject to} \,\frac{dU(t, \omega)}{dt} = -i(H_d + \sum_j u_j(t)H_j)U(t, \omega).
\end{split}
\end{align}

Here, $H_d$ denotes the drift Hamiltonian of the system, and the cost function $c(U(T),V)$ is generally some distance function on $SU(n)$. Common choices include the operator norm of the difference, the Frobenius norm  of the distance, or their squares. Another common choice is to choose $C(U(T), V) = - \mathrm{Tr}(U(T)^\dagger V)$, such that solving the optimisation problem maximises the \textit{unitary fidelity}. 

In this paper, we are interested in the quantity $\ess\,\sup_\omega\norm{U_S(T, \omega) - V}$. In this section, we no longer consider controls generating $V$ in (Fréchet) expectation, i.e. we don't require that $U_{0,S}(t) = V$. 

Instead, we apply the triangle inequality:

\begin{align}
\begin{split}
    &\ess\,\sup_\omega\norm{U_S(T, \omega) - V} \\
    &\leq \norm{U_{0,S}(T) - V} + \ess\,\sup_\omega\norm{U_S(T, \omega) - U_{0,S}(T)} \\
    & \leq \norm{U_{0,S}(T) - V} + \ess\,\sup_\omega\int_0^T \norm{H_{1,S}(t, \omega)}dt \\
    &=  \norm{U_{0,S}(T) - V} + \int_0^t\sqrt{g_\Lambda(H_{0,S}(t)), g_\Lambda(H_{0,S}(t))} \, dt.
\end{split}
\end{align}

In particular, the first term captures the error due to only being able to solve the deterministic control problem approximately, and the second term corresponds to the error induced by the noise. 

Now, we no longer have an explicit dependence on $\omega$, so we have successfully recast the stochastic (robust) control problem as a deterministic control problem, sometimes called a \textit{robust counterpart}. The cost function is given by:

\begin{align}
\begin{split}
    &c_1(H_0(\cdot)) \\ &= \norm{U_{0,S}(T) - V} + \int_0^t\sqrt{g_\Lambda(H_{0,S}(t)), g_\Lambda(H_{0,S}(t))} \, dt.
\end{split}
\end{align}

In fact, minimising the sum of the squares gives a cost function that is more amenable to standard numerical methods:

\begin{align}
\begin{split}
    &c_2(H_0(\cdot)) \\ &= \norm{U_{0,S}(T) - V}^2 + \Big[\int_0^t\sqrt{g_\Lambda(H_{0,S}(t)), g_\Lambda(H_{0,S}(t))} \, dt]\Big]^2.
\end{split}
\label{eq:75}
\end{align}

Minimising $c_2$ yields a minimiser that provides an upper bound on the true minimum of $c_1$, and provides satisfactory numerical results. 

\subsection{A one-qubit example}

To illustrate the impact of the approximation \eqref{eq:75}, compared to not ignoring noise altogether,  
let us consider the control of a single qubit, with Hamiltonian

\begin{align}
    H = H_d + \sum_{j=1}^3 u_j \sigma_j + H_n(t, \omega).
\end{align}

Here, $H_d$ denotes the drift Hamiltonian, which we take to be $\frac{1}{2}\sigma_y$. The space of admissible control functions is given by fifth-degree polynomials of the form $u_j= \sum_{k=0}^5c_jt^k$, which we initialise by choosing the $c_k$ randomly.

We assume that the coupling between $H_n$ is given by the $g_\Lambda$ from the previous example, 

\begin{align}
    g_\Lambda = \begin{pmatrix}1 & & \\ & \frac{1}{100} & \\ & & \frac{1}{100}\end{pmatrix},
\end{align}

so applying $\sigma_x$ induces ten times as much noise as applying $\sigma_y$ or $\sigma_z$. 

We now run two numerical experiments \footnote{Code available at \url{https://github.com/llorebaga/RandomSE}}, where we solve the quantum control problem in \textit{Julia} \cite{Julia-2017} using the numerical optimization package \textit{Optim} \cite{Optim}. In the first one, we minimize the cost function given by
\begin{align}\label{eq:nonoisecost}
    \norm{U_{0,S}(T) - V}^2, 
\end{align}
which does not take the noise into account. In the second, we perform noise informed optimization and minimize
\begin{align}\label{eq:noisecost}
\begin{split}
    &c_2(H_0(\cdot)) \\ &= \norm{U_{0,S}(T) - V}^2 + \Big[\int_0^t\sqrt{g_\Lambda(H_{0,S}(t)), g_\Lambda(H_{0,S}(t))} \, dt\Big]^2.
\end{split}
\end{align}

For this, we sample from a noise process in a way that satisfies the coupling. We simulate 100 noise trajectories, setting the entries of $H_{1,S}(t, \cdot)$ to the coordinate-wise arctangent of a Gaussian process with Matérn covariance. We then multiply by the envelope function $\frac{1}{2 \pi} g_\Lambda(H_0(t), H_0(t))$, which  guarantees that $\ess \, \sup \,H_n(t) = g_\Lambda(H_0(t), H_0(t))$.

Finally, we evolve the system with each control (noise-aware and noise-blind) and compare the difference in fidelity, as shown in a violin plot in Figure \ref{fig:optimal_control}. 
 
\begin{figure}[t]
    \centering
    \includegraphics[width=\linewidth]{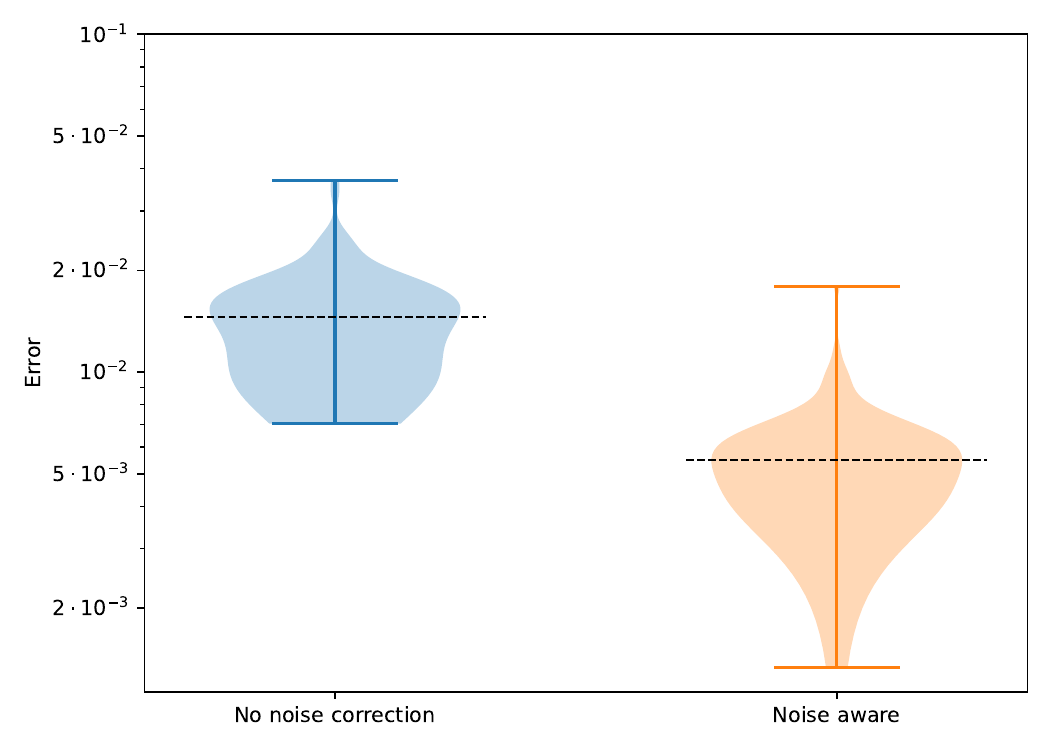}
  \caption{Violin plots of the error, accounting for the noise (i.e. minimising \eqref{eq:noisecost}) and not accounting for the noise (i.e. minimsing \eqref{eq:nonoisecost}). Each distribution consists of 100 random samples for the control functions, with the dashed lines indicating the mean. We see an approximately 2.6-fold reduction in operator-norm error.}
  \label{fig:optimal_control}
\end{figure}

\begin{figure}[t]
    \centering
    \includegraphics[width=\linewidth]{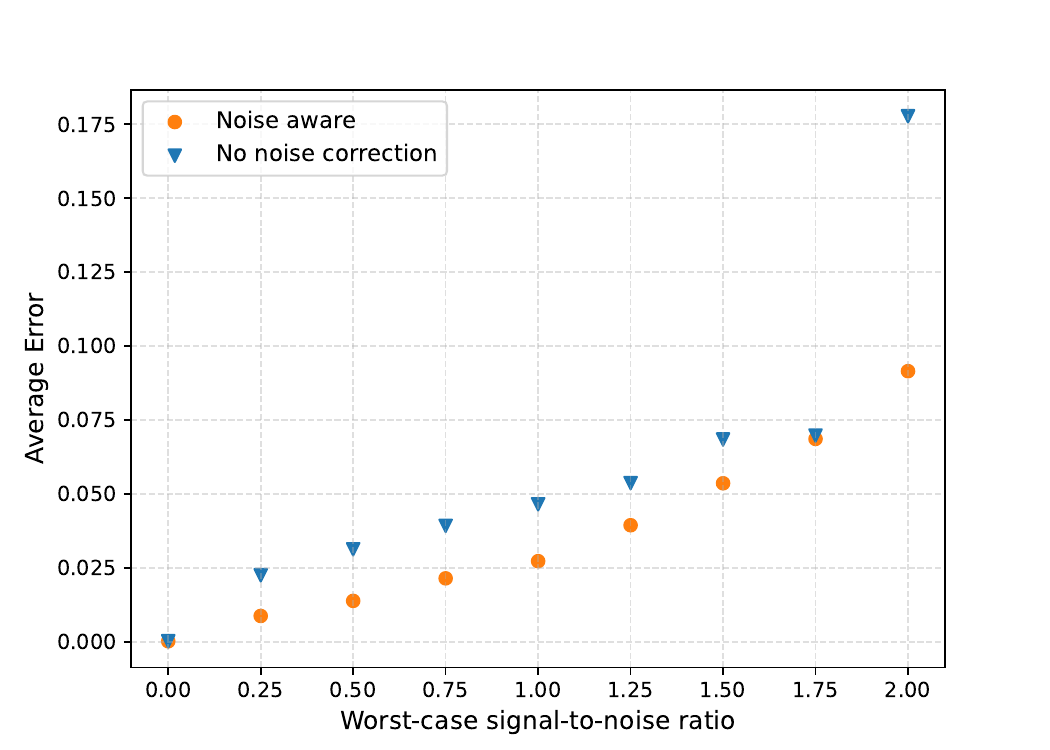}
  \caption{A plot of the average unitary error accounting for the noise (orange circles) and not accounting for the noise (blue triangles), for 100 random control functions.}
  \label{fig:ratio}
\end{figure}

In this example, the worst-case signal-to-noise ratio, $\eta$, corresponding to the largest entry of $g_\Lambda$, is one. However, it turns out that the reduction in unitary error is robust across a fairly large range of signal-to-noise ratios. To study this, we consider signal-to-noise ratios varying in $[0,2.0]$, where 0 implies no noise and $2.0$ implies a major impact of noise as the system evolves, as we see in the noise term
\begin{align}\label{eq:noisecost}
\begin{split}
    &\int_0^t \eta \, \sqrt{g_\Lambda(H_{0,S}(t)), g_\Lambda(H_{0,S}(t))} \, dt.
\end{split}
\end{align}
In Figure \ref{fig:ratio} we follow a similar setting as for Figure \ref{fig:optimal_control}, but we the signal-to-noise ratio $\eta$. We plot the average unitary error, both in the noise aware setting, and ignoring the noise.

Both figures \ref{fig:optimal_control} and \ref{fig:ratio} illustrate how adding a noise term in the optimization problem helps find a more robust solution. We observe a linear increase in the average error with the signal-to-noise ratio, and a more regular behavior from the noise-aware results. 

\section{Conclusion}

Using the theory of random ordinary differential equations, we have proposed a new model of open quantum systems describing a large class of quantum systems subject to external noise. In particular, we prove tight bounds on the error induced by the noise. Inspired by the work of Nielsen, we have introduced a natural class of couplings between the control field and the noise term, which are described by a left-invariant metric on $SU(n)$ called a noise metric. 

Additionally, we have demonstrated an approach to synthesising a target unitary while minimising the worst-case error, which can be construed as robust quantum control. By minimising the worst-case error, we are able to transform a stochastic optimal control problem into a more tractable deterministic control problem. This is similar to the approach of constructing a \textit{robust counterpart} in optimisation theory. When the noise is coupled to the control via a noise metric, the optimal trajectory is a geodesic on $SU(n)$ with respect to the noise metric. 

More work is needed to explicitly compute these geodesics, particularly in higher dimension, and in the sub-Riemannian limit. We hope to address these questions in future work. We have presented an alternative approach, referred to as noise-informed quantum optimal control, which does not involve computing geodesics. Instead, the method accounts for noise by incorporating a term corresponding to the noise as a soft constraint in classical optimal control problem, which can be solved using standard methods.

\section{Acknowledgments}


The work of Rufus Lawrence, Aleš Wodecki, Johannes Aspman and Jakub Mareček has been supported by the Czech Science Foundation (23-07947S).
Llorenç Balada Gaggioli has been supported by
European Union’s HORIZON–MSCA-2023-DN-JD programme under under the Horizon Europe (HORIZON) Marie Sklodowska-Curie Actions, grant agreement
101120296 (TENORS).

\end{document}